\def \F {{\mathbb F}}
\def \Q {{\mathbb Q}}
\def \Z {{\mathbb Z}}
\def \V {{\mathbb V}}
\def \T {{\rm Tr}}
\newtheorem{theorem}{Theorem}
\newtheorem{definition}[theorem]{Definition}
\newtheorem{lemma}[theorem]{Lemma}
\newtheorem{proposition}[theorem]{Proposition}
\newtheorem{remark}[theorem]{Remark}
\newtheorem{corollary}[theorem]{Corollary}
\def\cH{{\mathcal H}}
\def\cGB{\mathcal{GB}}
\def\00{{\bf 0}}
\def\11{{\bf 1}}
\def\+{\oplus}
\def \F {{\mathbb F}}
\def \Q {{\mathbb Q}}
\def \Z {{\mathbb Z}}
\def \V {{\mathbb V}}
\def \T {{\rm Tr}}
\begin{document}

\title{\huge\bf
\textrm{Partial Spread and Vectorial Generalized Bent Functions} }

\author{\Large  Thor Martinsen$^1$, Wilfried Meidl$^2$, \and \Large  Pantelimon St\u anic\u a$^1$
\vspace{0.4cm} \\
\small $^1$Department of Applied Mathematics, \\
\small Naval Postgraduate School, Monterey, CA 93943-5212, U.S.A.;\\
\small Email: {\tt \{tmartins,pstanica\}@nps.edu}\\
\small $^2$Johann Radon Institute for Computational and Applied Mathematics,\\
\small Austrian Academy of Sciences, Altenbergerstrasse 69, 4040-Linz, Austria;\\
\small Email: {\tt meidlwilfried@gmail.com}
}

\date{\today}
\maketitle
\thispagestyle{empty}

\begin{abstract}
In this paper we generalize the partial spread class and completely describe it for generalized Boolean functions from $\F_2^n$ to $\mathbb{Z}_{2^t}$.
Explicitly, we describe gbent functions from $\F_2^n$ to $\mathbb{Z}_{2^t}$, which can be seen as a gbent version of Dillon's $PS_{ap}$ class.
For the first time, we also introduce the concept of a vectorial gbent function from $\F_2^n$ to $\Z_q^m$, and determine the maximal value which $m$ can attain for the  case $q=2^t$.
Finally we point to a relation between vectorial gbent functions and relative difference sets.
\end{abstract}

\section{Introduction}

Let $\V_n$ be the $n$-dimensional vector space over the two-element field $\F_2$ and for an integer $q$ let $\Z_q$ be the ring of integers modulo $q$.
For a function $f$ from $\V_n$ to $\Z_q$ the generalized Walsh-Hadamard transform is the complex valued function
\[ \mathcal{H}^{(q)}_f(u) = \sum_{x\in \V_n}\zeta_q^{f(x)}(-1)^{\langle u,x\rangle},\quad  \zeta_q = e^{\frac{2\pi i}{q}}, \]
where $\langle,\rangle$ denotes a nondegenerate inner product on $\V_n$ (we shall use $\zeta$, respectively, $\cH_f$, instead of $\zeta_q$, respectively, $\cH_f^{(q)}$,  when $q$ is fixed).
We will follow our notations from~\cite{MMS1} and denote the set of all generalized Boolean functions by $\mathcal{GB}_n^q$ and when $q=2$, by $\mathcal{B}_n$.
A function $f\in \mathcal{GB}_n^q$ is called {\em generalized bent} ({\em gbent}) if $|\mathcal{H}^{(q)}_f(u)| = 2^{n/2}$ for all $u\in \V_n$.
Recall that if $q=2$, these functions are called {\em bent}.

If $f$ is gbent such that for every $u\in \V_n$, we have $\mathcal{H}^{(q)}_f(u) = 2^{n/2}\zeta_q^{j_u}$ for some $0\le j_u < q$,
then - following the notation for bent functions in odd characteristic (see \cite{agw,KSW85}) - we call $f$ a {\em regular} gbent function.
Similar as for bent functions we call $f^*$ the {\em dual} of $f$, if $2^{n/2}\zeta_q^{f^*(u)} = \mathcal{H}^{(q)}_f(u)$.
With the same argument as for the conventional bent functions we can see that the dual $f^*$
is also gbent and $(f^*)^* = f$. Hence regular gbent functions always appear in pairs.
First note that for $y\in\V_n$ we have
\begin{align*}
& \sum_{u\in\V_n}(-1)^{\langle u,y\rangle}\mathcal{H}^{(q)}_f(u) =
\sum_{u\in\V_n}(-1)^{\langle u,y\rangle}\sum_{x\in\V_n}\zeta_q^{f(x)}(-1)^{\langle b,x\rangle} = \\
& \sum_{x\in\V_n}\zeta_q^{f(x)}\sum_{u\in\V_n}(-1)^{\langle u,x+y\rangle} = 2^n\zeta_q^{f(y)}.
\end{align*}
With $\mathcal{H}^{(q)}_f(u) = 2^{n/2}\zeta_q^{f^*(u)}$, we then get
\[ 2^n\zeta_q^{f(y)} = 2^{n/2}\sum_{u\in\V_n}(-1)^{\langle u,y\rangle}\zeta_q^{f^*(u)}. \]
%
%
We finally remark that as shown in \cite{MMS1}, gbent functions from $\V_n$ to $\Z_{2^t}$, $t\ge 1$, which are the functions in which we are
most interested in this article, are always regular. Therefore the dual of a gbent function is always defined and it is a gbent function, as well.

Since the introduction of Boolean bent functions in~\cite{r}, bent functions and generalizations, like bent functions in odd characteristic,
negabent functions and the more general class of gbent functions (see e.g. \cite{HP,KUS1,KUS2,smgs}), attracted a lot of attention.
Many classes of bent functions have been proposed, the most famous being the Maiorana-McFarland class and Dillon's partial spread~($PS$)
class~\cite{d}. In this article we generalize the partial spread class to gbent functions. In Section~\ref{secPSap} we explicitly describe
gbent functions in $\mathcal{GB}_n^{2^t}$, which can be seen as a gbent version of Dillon's $PS_{ap}$ bent functions, which
form a subclass of the class of partial spread bent functions. In Section~\ref{secPSpm} we give a complete characterization of the
partial spread class for gbent functions in $\mathcal{GB}_n^{2^t}$. We suggest a concept of vectorial gbent functions from $\F_2^n$ to~$\Z_q^m$
in Section~\ref{vectorGB}, and determine the maximal value which $m$ can attain for~$q=2^t$.
We show that our bound for $m$ is attained giving an example of vectorial gbent functions arising from the class of partial spread gbent functions.
Finally we point to a relation between vectorial gbent functions and relative difference sets.

\section{$\mathcal{PS}_{ap}$ gbent functions}
\label{secPSap}

In \cite{smgs} the following construction of gbent functions has been introduced and referred to as the generalized Dillon class:
Let $n=2m$, and let $U_0, U_1,\ldots$, $U_{2^m}$ be a spread of $\V_n$, that is, $U_i$'s, $0\le i\le 2^m$, are $m$-dimensional subspaces of
$\V_n$ with pairwise trivial intersection. For integers $k_0,k_1,\ldots,k_{2^m},r$ of the set $\{0,1,\ldots,q-1\}$ such that
$\sum_{i=0}^{2^m}\zeta_q^{k_i} = \zeta_q^r$, we define $f:\V_n\rightarrow\Z_q$ as
\begin{equation}
\label{gPS}
f(x) = k_i\quad\mbox{if}\quad x\in U_i\quad\mbox{and}\quad x\ne 0,\quad\mbox{and}\quad f(0) = r.
\end{equation}
The gbentness of $f$ follows easily from the fact that for every nonzero $u\in \V_n$ we have $\langle u,x\rangle = 0$
for all $x\in U_t$, for exactly one $0\le t\le 2^m$. On the other spread elements $\langle u,x\rangle$ is balanced. If $u\ne 0$, then
\begin{eqnarray*}
\mathcal{H}^{(q)}_f(u) & = & \sum_{x\in \V_n}\zeta_q^{f(x)}(-1)^{\langle u,x\rangle} =
\sum_{i=0}^{2^m}\sum_{x\in U_i}\zeta_q^{k_i}(-1)^{\langle u,x\rangle} - \sum_{i=0}^{2^m}\zeta_q^{k_i} + \zeta_q^r \\
& = & \sum_{i=0}^{2^m}\zeta_q^{k_i}\sum_{x\in U_i}(-1)^{\langle u,x\rangle} = 2^{n/2}\zeta_q^{k_t},
\end{eqnarray*}
if $u\in U_t^\perp$ ($U^\perp$ is the orthogonal complement of $U$, with respect to ${\langle u,x\rangle}$). If $u=0$, then
\[ \mathcal{H}^{(q)}_f(0) = \sum_{i=0}^{2^m}\zeta_q^{k_i}2^{n/2} = 2^{n/2}\zeta_q^r. \]
We observe that $f$ defined in $(\ref{gPS})$ is a regular gbent function and that the dual $f^*$ of $f$ is defined via the orthogonal
spread (with respect to the inner product $\langle,\rangle$) as
\[ f^*(x) = k_i\quad\mbox{if}\quad x\in U^\perp_i\quad\mbox{and}\quad x\ne 0,\quad\mbox{and}\quad f^*(0) = r. \]

For $q=2$ the subclass of bent functions obtained with the construction in~\eqref{gPS} using the regular (Desarguesian) spread is called
Dillon's $PS_{ap}$ class. To be precise, we obtain a $PS^-$ bent function defined on the regular spread if $r=0$, and if $r=1$ we obtain
the complement of a $PS^-$ bent function, which in this case it is a $PS^+$ bent function. For the definition of $PS^-$ and $PS^+$ bent functions
we refer to \cite{d}.

In bivariate form, that is, as functions from $\F_{2^m}\times\F_{2^m}$ to $\F_2$, the $PS_{ap}$ class has an explicit
representation as $f(x,y) = G\left(\frac{x}{y}\right)$ for a balanced function $G:\F_{2^m}\rightarrow \F_2$ (we always assume the convention that~$1/0 = 0$).
In the following theorem we present an explicit representation of functions in a generalization of Dillon's $PS_{ap}$ class to
gbent functions with $q=2^k$. We use $\mathcal{H}$ for $\mathcal{H}^{(2^k)}$, and $\zeta = e^{\frac{2\pi i}{2^k}}$.
\begin{theorem}
\label{exPSk}
Let $G_j:\F_{2^m}\rightarrow\F_2$, $0\leq j\leq k-1$, be  
Boolean functions with $G_j(0)=0$ and $\displaystyle\sum_{t\in\F_{2^m}}\zeta^{\sum_{j=0}^{k-1} 2^j G_j(t)}=0$.
Then the function
$f:\F_{2^m}\times\F_{2^m}\rightarrow\Z_{2^k}$ given by
\[ f(x,y) = \sum_{j=0}^{k-1} 2^j G_j\left({x}/{y}\right)   \]
is a gbent function with the dual
\[ f^*(x,y) = \sum_{j=0}^{k-1} 2^j G_j\left({y}/{x}\right).   \]
\end{theorem}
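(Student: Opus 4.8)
The plan is to realize $f$ as an instance of the generalized Dillon construction in~\eqref{gPS} for a suitable Desarguesian spread, and then read off gbentness and the dual directly from the computation carried out after~\eqref{gPS}. First I would fix the bivariate model $\V_n = \F_{2^m}\times\F_{2^m}$ with $n=2m$, and take the regular spread given by the lines through the origin: $U_\infty = \{(0,y) : y\in\F_{2^m}\}$ and, for each $a\in\F_{2^m}$, $U_a = \{(ay,y) : y\in\F_{2^m}\}$ (equivalently, $U_a = \{(x,y) : x = ay\}$). These $2^m+1$ subspaces are $m$-dimensional with pairwise trivial intersection, so they form a spread. With the convention $1/0=0$, the value $x/y$ is constant on each $U_a\setminus\{0\}$: it equals $a$ on $U_a$ and equals $0$ on $U_\infty\setminus\{0\}$. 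Hence setting $k_a = \sum_{j=0}^{k-1}2^j G_j(a)$ for $a\in\F_{2^m}$ and $k_\infty = \sum_{j=0}^{k-1}2^j G_j(0) = 0$ (using $G_j(0)=0$), the function $f(x,y)=\sum_j 2^j G_j(x/y)$ is exactly the function~\eqref{gPS} with $r = k_\infty = 0$, provided we check the spread condition $\sum_{i}\zeta^{k_i} = \zeta^r$.

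The spread condition is where the hypothesis $\sum_{t\in\F_{2^m}}\zeta^{\sum_j 2^j G_j(t)} = 0$ is used: we have
\[
\sum_{i}\zeta^{k_i} = \zeta^{k_\infty} + \sum_{a\in\F_{2^m}}\zeta^{k_a}
= 1 + \sum_{t\in\F_{2^m}}\zeta^{\sum_{j}2^j G_j(t)} = 1 + 0 = 1 = \zeta^r,
\]
so $f$ falls under the construction~\eqref{gPS} and is therefore a regular gbent function by the computation of $\mathcal{H}^{(q)}_f(u)$ already displayed in the excerpt. It remains to identify the dual. By the remark following~\eqref{gPS}, $f^*$ is obtained from the \emph{orthogonal} spread $\{U_i^\perp\}$ with the \emph{same} labels $k_i$ and the same $r$. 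So the task reduces to computing $U_a^\perp$ and $U_\infty^\perp$ for the standard inner product on $\F_{2^m}\times\F_{2^m}$, and then recognizing the labelled orthogonal spread as the graph of the map $(x,y)\mapsto y/x$.

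Concretely, with $\langle(x_1,y_1),(x_2,y_2)\rangle = \T(x_1 x_2) + \T(y_1 y_2)$ (trace inner product, any nondegenerate choice works), one checks that $U_a^\perp = U_{1/a}$ for $a\neq 0$, while $U_0^\perp = U_\infty$ and $U_\infty^\perp = U_0$; i.e. $\perp$ acts as the inversion $a\mapsto 1/a$ on the index set $\F_{2^m}\cup\{\infty\}$ (with $1/0=\infty$, $1/\infty=0$). Transporting the labels accordingly, the element $U_b^\perp = U_{1/b}$ carries the label $k_b$, so on $U_{1/b}\setminus\{0\}$, where $x/y = 1/b$ and hence $y/x = b$, the dual takes value $k_b = \sum_j 2^j G_j(b) = \sum_j 2^j G_j(y/x)$; the point $0$ gets $r=0 = \sum_j 2^j G_j(0) = \sum_j 2^j G_j(y/x)|_{\text{convention}}$. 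Thus $f^*(x,y) = \sum_{j=0}^{k-1} 2^j G_j(y/x)$, as claimed. The only slightly delicate point is the bookkeeping of the exceptional indices $0$ and $\infty$ under $\perp$ and under the conventions $1/0 = 0$ in the two roles ($x/y$ versus labelling the spread), but since $G_j(0)=0$ forces the labels at both exceptional lines to vanish, these edge cases are consistent and cause no trouble.
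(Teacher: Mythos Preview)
Your approach is sound and genuinely different from the paper's, but there is a slip in the definition of the spread: with $U_a=\{(ay,y):y\in\F_{2^m}\}$ for $a\in\F_{2^m}$, the element $U_0$ is already $\{(0,y)\}$, so your $U_\infty=\{(0,y)\}$ duplicates $U_0$ and you have only $2^m$ distinct subspaces, not a spread. The missing line is $U_\infty=\{(x,0):x\in\F_{2^m}\}$; on $U_\infty\setminus\{0\}$ the convention $1/0=0$ still gives $x/y=0$, so your labelling $k_\infty=0$ and the remainder of the argument go through unchanged once this is fixed. (Your later assertions $U_0^\perp=U_\infty$ and $U_\infty^\perp=U_0$ are in fact consistent only with this corrected definition.)

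With that repair the proof is correct but takes a different route from the paper. The paper does not invoke the general construction~\eqref{gPS}; instead it computes $\mathcal{H}_f(u,v)$ directly via the substitution $s=y/x$, splits the sum into three pieces $I-II+III$, and uses the hypothesis $\sum_t\zeta^{\sum_j 2^jG_j(t)}=0$ to kill $II$. Your argument is more structural: it exhibits Theorem~\ref{exPSk} as a specialization of the spread construction already analyzed after~\eqref{gPS}, so no fresh Walsh computation is required and the dual comes for free from the orthogonal-spread description of $f^*$ stated there. The paper's direct computation, by contrast, yields the dual from an explicit closed form $\mathcal{H}_f(u,v)=2^m\zeta^{\sum_j 2^j G_j(v/u)}$ without having to identify the orthogonals of the spread lines.
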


\begin{proof}
Using the inner product $\langle (x_1,y_1),(x_2,y_2)\rangle = \T_m(x_1x_2+y_1y_2)$ on $\F_{2^m}\times\F_{2^m}$, for $u,v\in\F_{2^m}$,
with $s:=y/x$ we have
\begin{eqnarray*}
\mathcal{H}_f(u,v) & = & \sum_{s\in\F_{2^m}}\sum_{\substack{x\in\F_{2^m}\\  x\ne 0}}\zeta^{\sum_{j=0}^{k-1} 2^j G_j(s^{-1})}(-1)^{\T_m(ux+vsx)}\\
 & &\qquad\qquad\qquad + \sum_{y\in\F_{2^m}}(-1)^{\T_m(vy)} \\
& = & \sum_{s\in\F_{2^m}}\zeta^{\sum_{j=0}^{k-1} 2^j G_j(s^{-1})}\sum_{x\in\F_{2^m}}(-1)^{\T_m(ux+vsx)} \\
& & - \sum_{s\in\F_{2^m}} \zeta^{\sum_{j=0}^{k-1} 2^j G_j(s^{-1})} + \sum_{y\in\F_{2^m}}(-1)^{\T_m(vy)} := I - II + III.
\end{eqnarray*}
By the assumption on the balanced functions $G_j$, then $II = 0$.
If $v\ne 0$, then $III = 0$, and consequently
\[ \mathcal{H}_f(u,v) = 2^m \zeta^{\sum_{j=0}^{k-1} 2^j G_j(v/u)}. \]
If $v=0$, then $III = 2^m$. Consequently, with $\sum_{x\in\F_{2^m}}(-1)^{\T_m(ux)} = 0$ if $u\ne 0$, we get $\mathcal{H}_f(u,0) = III = 2^m$.
Finally,
\[ \mathcal{H}_f(0,0) = 2^m \sum_{s\in\F_{2^m}} \zeta^{\sum_{j=0}^{k-1} 2^j G_j(s^{-1})} + 2^m = 2^m II + 2^m = 2^m \]
by the assumption on the functions $G_j$. Therefore, in all cases, $|\mathcal{H}_f(u,v)|=2^m$, hence $f$ is gbent.
As $\mathcal{H}_f(u,v)$ is obtained explicitly for all $(u,v)$, we also can confirm the formula for the dual.
\end{proof}
%
With $G_0(x) = \T_m(ax)$ and $G_1(x) = \T_m(bx)$ for two distinct elements $a,b\in\F_{2^m}^*$ we obtain the following corollary.
\begin{corollary}
Let $a,b\in\F_{2^m}^*$, $a\ne b$, then the function $f:\F_{2^m}\times\F_{2^m}\rightarrow\Z_4$
\[ f(x,y) = \T_m\left(\frac{ax}{y}\right) + 2\T_m\left(\frac{bx}{y}\right) \]
with the convention that $1/0 = 0$, is gbent.
\end{corollary}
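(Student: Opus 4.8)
The plan is to derive the corollary directly from Theorem~\ref{exPSk} by specializing to $k=2$. Concretely, I would set $G_0(x) = \T_m(ax)$ and $G_1(x) = \T_m(bx)$; these are Boolean functions on $\F_{2^m}$, and with $\zeta = e^{2\pi i/4} = i$ the function produced by the theorem is exactly $f(x,y) = \sum_{j=0}^{1} 2^j G_j(x/y) = \T_m(ax/y) + 2\T_m(bx/y)$. So the corollary follows once the two hypotheses on $G_0, G_1$ are checked.

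The first hypothesis, $G_j(0) = 0$, is immediate: $\T_m(0) = 0$ by linearity of the trace. The second hypothesis is the balancedness-type condition $\sum_{t\in\F_{2^m}} \zeta^{G_0(t) + 2G_1(t)} = 0$, i.e. $\sum_{t\in\F_{2^m}} i^{\T_m(at) + 2\T_m(bt)} = 0$. To handle this I would rewrite $i^{\T_m(at) + 2\T_m(bt)} = i^{\T_m(at)}(-1)^{\T_m(bt)}$ (using $i^2 = -1$), and then linearize the remaining fourth root of unity via the identity $i^c = \tfrac{1+i}{2} + \tfrac{1-i}{2}(-1)^c$, valid for $c\in\{0,1\}$. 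Together with $\T_m(at) + \T_m(bt) = \T_m((a+b)t)$, this turns the sum into
\[
\frac{1+i}{2}\sum_{t\in\F_{2^m}}(-1)^{\T_m(bt)} + \frac{1-i}{2}\sum_{t\in\F_{2^m}}(-1)^{\T_m((a+b)t)}.
\]
Since $b\ne 0$ and $a+b\ne 0$ (because $a\ne b$), the two linear forms $t\mapsto\T_m(bt)$ and $t\mapsto\T_m((a+b)t)$ are nontrivial, so both Walsh sums vanish and the whole expression is $0$.

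With both hypotheses verified, Theorem~\ref{exPSk} yields that $f$ is gbent (with dual $\T_m(ay/x) + 2\T_m(by/x)$), which is the claim. The only step needing any care is the evaluation of the mixed exponential sum over $\F_{2^m}$; once it is reduced to the two ordinary Walsh sums of nontrivial linear forms the conclusion is routine, and I do not foresee a genuine obstacle. In essence the corollary records that the $\Z_4$-combination of two distinct nonzero linear traces meets exactly the "generalized balancedness" needed to feed the $PS_{ap}$-type construction of Theorem~\ref{exPSk}.
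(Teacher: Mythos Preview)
Your proof is correct and follows the same overall strategy as the paper: both apply Theorem~\ref{exPSk} with $k=2$, $G_0(x)=\T_m(ax)$, $G_1(x)=\T_m(bx)$, and both reduce the corollary to checking the balancedness hypothesis $\sum_{t\in\F_{2^m}} i^{G_0(t)+2G_1(t)}=0$.

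The difference lies only in how that hypothesis is verified. The paper argues combinatorially: since $a\ne b$ are nonzero, the kernels of $G_0$ and $G_1$ are distinct hyperplanes, so all four intersections $G_0^{-1}(j)\cap G_1^{-1}(k)$ have size $2^{m-2}$, and the four values $1,i,-1,-i$ each occur equally often in the sum. You instead linearize via $i^c=\tfrac{1+i}{2}+\tfrac{1-i}{2}(-1)^c$ and reduce to the vanishing of two ordinary additive character sums $\sum_t(-1)^{\T_m(bt)}$ and $\sum_t(-1)^{\T_m((a+b)t)}$, which is immediate since $b\ne 0$ and $a+b\ne 0$. Both verifications are short and equally valid; the paper's counting argument makes the equidistribution of $(G_0,G_1)$ explicit, while your character-sum reduction avoids any casework and would generalize more readily to other linear choices of the $G_j$.
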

\begin{proof}
For a function $G:\F_{2^m}\rightarrow\F_2$ we put $G^i :=\{x\in\F_{2^m}\;|\;G(x)=i\}$, $i=0,1$.
With this notation, $G^0_0$ and $G^0_1$ are two distinct hyperplanes of $\F_{2^m}$, which intersect in an $(m-2)$-dimensional subspace.
Consequently the condition $|G^{0}_0\cap G^{1}_0| = 2^{m-2}$ is satisfied, which further implies that $|G_0^j\cap G_1^k| = 2^{m-2}$,
for all $j,k\in\{0,1\}$, and so, $\sum_{s\in\F_{2^m}}i^{G_0(s^{-1})+2G_1(s^{-1})}=0$, and the previous theorem applies.
\end{proof}
%



%

\section{$\mathcal{PS}^{+/-}$ gbent functions}
\label{secPSpm}

Being defined on a complete spread, for $q=2$ with the construction in $(\ref{gPS})$ we obtain $PS^-$ or complements of $PS^-$ bent functions.
To generate a partial spread bent function, solely $2^{m-1}$ subspaces of dimension $m$ with pairwise trivial intersection are needed (see~\cite{d}).
Since, in general, a partial spread is not contained in a complete spread, many more bent functions are in the partial spread class.
In this section we generalize the concept of partial spread bent functions to gbent functions $f\in\mathcal{GB}_n^q$, $q=2^t$, by completely
characterizing all gbent functions for which
\begin{itemize}
\item[-] $f$ is constant on the nonzero elements of every element of a partial spread
$\{U_1,U_2,\ldots,U_A\}$,
\item[-] $f(0) = \rho$ for some $0\le \rho\le 2^t-1$, and
\item[-] $f(x) = 0$ for $x\in\V_n\setminus\bigcup_{k=1}^AU_k$.
\end{itemize}
Here we always assume that $f$ is constant {\it nonzero} on $U^*$, $1\le k\le A$. Otherwise we may switch to an according subspread by deleting some
of the $U_k$ from the partial spread.
We remark that such a generalization to bent functions in odd characteristic has been given in \cite{k,ll}.

Since $q=2^t$ is fixed, in this section we again write $\mathcal{H}$ for $\mathcal{H}^{(q)}$, and put $\zeta = e^{\frac{2\pi i}{2^t}}$.
\begin{proposition}
\label{PSt}
Let $q=2^t$, $n=2m$ and let $U_1,\ldots, U_A$ be elements of a partial spread of $\V_n = \V_{2m}$. For integers $k_1,k_2,\ldots,k_A$
of the set $\{1,\ldots,q-1\}$ and $0\le \rho \le q-1$, such that
\begin{equation}
\label{PScon}
\sum_{i=1}^A\zeta^{k_i} = A-(2^m+1)+\zeta^\rho
\end{equation}
we define a function $f$ from $\V_n$ to $\Z_q$ by
\begin{align*}
&f(0) = \rho \text{ and } f(x) = k_i\quad\mbox{if}\quad x\in U_i\quad\mbox{and}\quad x\ne 0,\\
& f(x) = 0 \quad\mbox{if}\quad x\in \V_n\setminus\bigcup_{k=1}^AU_k.
\end{align*}
The function $f$ is gbent, and the dual $f^*$ of $f$ is obtained with the orthogonal spread as
\begin{align*}
&f^*(0) = \rho \text{ and } f(x) = k_i\quad\mbox{if}\quad x\in U_i^\perp\quad\mbox{and}\quad x\ne 0,\\
& f(x) = 0 \quad\mbox{if}\quad x\in \V_n\setminus\bigcup_{k=1}^AU_k^\perp.
\end{align*}
\end{proposition}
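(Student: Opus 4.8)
The plan is to compute the generalized Walsh--Hadamard transform $\mathcal{H}_f(u)$ in closed form for every $u\in\V_n$, along the same lines as the proof of Theorem~\ref{exPSk}, and then read off both the gbentness and the dual. First I would write $\zeta^{f(x)}$ in ``indicator form'': since the $U_i$ pairwise intersect only in $0$, every nonzero $x$ lies in at most one $U_i$, so for all $x\in\V_n$
\[ \zeta^{f(x)} \;=\; 1 \;+\; (\zeta^{\rho}-1)\,\mathbf{1}_{\{0\}}(x) \;+\; \sum_{i=1}^{A}(\zeta^{k_i}-1)\,\mathbf{1}_{U_i\setminus\{0\}}(x). \]
Substituting this into $\mathcal{H}_f(u)=\sum_{x\in\V_n}\zeta^{f(x)}(-1)^{\langle u,x\rangle}$ and using the standard facts that $\sum_{x\in\V_n}(-1)^{\langle u,x\rangle}$ equals $2^n$ for $u=0$ and $0$ otherwise, and that $\sum_{x\in U_i}(-1)^{\langle u,x\rangle}$ equals $2^m$ when $u\in U_i^\perp$ and $0$ otherwise, I obtain $\mathcal{H}_f(u)$ as an explicit combination of $2^n$, $\zeta^{\rho}-1$, and the terms $(\zeta^{k_i}-1)\bigl(\sum_{x\in U_i}(-1)^{\langle u,x\rangle}-1\bigr)$, the last of which equals $2^m-1$ when $u\in U_i^\perp$ and $-1$ otherwise.

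Next I would record that $U_1^\perp,\dots,U_A^\perp$ is again a partial spread: each $U_i^\perp$ has dimension $m$, and for $i\ne j$ we have $U_i^\perp\cap U_j^\perp=(U_i+U_j)^\perp=\{0\}$, since $U_i\cap U_j=\{0\}$ together with $\dim U_i+\dim U_j=2m$ forces $U_i+U_j=\V_n$. Hence every nonzero $u$ lies in at most one $U_t^\perp$, so there are three cases: $u=0$ (which lies in \emph{every} $U_i^\perp$); $u\ne 0$ with $u\in U_t^\perp$ for a unique $t$; and $u\ne 0$ with $u\notin\bigcup_i U_i^\perp$. In each case I substitute hypothesis~\eqref{PScon} in the equivalent form $\sum_{i=1}^{A}(\zeta^{k_i}-1)=\zeta^{\rho}-2^m-1$ and simplify, which yields $\mathcal{H}_f(0)=2^m\zeta^{\rho}$, $\mathcal{H}_f(u)=2^m\zeta^{k_t}$, and $\mathcal{H}_f(u)=2^m$, respectively. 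In particular $|\mathcal{H}_f(u)|=2^m=2^{n/2}$ for all $u$, so $f$ is gbent; and since $\mathcal{H}_f(u)=2^m\zeta^{f^*(u)}$ with $f^*$ precisely the function of the same shape built on the orthogonal spread $\{U_i^\perp\}$ using the same constants $k_1,\dots,k_A,\rho$ (which again satisfy~\eqref{PScon}, as that condition involves only $A$, the $k_i$, $\rho$ and $m$), the claimed formula for the dual follows directly from this computation.

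The computation is routine once it is organized this way; the one genuinely delicate point is the triple cancellation in the three cases, which is exactly where~\eqref{PScon} is used and which accounts for the correction term $A-(2^m+1)$ on the right-hand side of~\eqref{PScon} --- it compensates both for the value $f\equiv 0$ off $\bigcup_k U_k$ and for the fact that the $U_i$ all share the point $0$. A small bookkeeping trap worth flagging is that $u=0$ lies in \emph{all} of the $U_i^\perp$, so the case $u=0$ must be treated separately and is not a special instance of the case $u\in U_t^\perp$.
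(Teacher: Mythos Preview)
Your proposal is correct and follows essentially the same approach as the paper: a direct computation of $\mathcal{H}_f(u)$ in the three cases $u=0$, $u\in U_t^\perp$ for a unique $t$, and $u\notin\bigcup_i U_i^\perp$, using condition~\eqref{PScon} to collapse each case to $2^m$ times a root of unity. Your indicator-function decomposition of $\zeta^{f(x)}$ is just a tidier repackaging of the paper's domain decomposition into $\{0\}$, the $U_i^*$, and the complement $R$; you also make explicit the (small but necessary) fact that the $U_i^\perp$ again form a partial spread, which the paper uses tacitly.
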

\begin{proof}
Let $R$ be the set $R = \V_n\setminus\bigcup_{k=1}^AU_k$, which has cardinality $|R| = 2^n - A(2^m-1)-1$. Then
\begin{align*}
\mathcal{H}_{f}(0) &= \sum_{x\in \V_n}\zeta^{f(x)} = \sum_{i=1}^A\zeta^{k_i}\sum_{\substack{x\in U_i\\ x\ne 0}}1 + \zeta^\rho+\sum_{x\in R}1 \\
& = (2^m-1)(A-(2^m+1)+\zeta^\rho) + \zeta^\rho + 2^n-A(2^m-1)-1 = 2^m\zeta^\rho.
\end{align*}
To evaluate $\mathcal{H}_{f}(u)$ for $u\ne 0$, we distinguish two cases. First we suppose that $u$ is not an element of $U_r^\perp$ for any $1\le r\le A$. Then
$\sum_{k=1}^A\sum_{x\in U_k^*}(-1)^{\langle u,x\rangle} = -A$. Therefore, with $\sum_{x\in\V_n}(-1)^{\langle u,x\rangle} = 0$ we have
$\sum_{x\in R}(-1)^{\langle u,x\rangle} = A-1$. As a consequence,
\begin{align*}
\mathcal{H}_{f}(u) &= \sum_{x\in \V_n}\zeta^{f(x)}(-1)^{\langle u,x\rangle} \\
&= \sum_{i=1}^A\zeta^{k_i}\sum_{x\in U_i}(-1)^{\langle u,x\rangle}
- \sum_{i=1}^A\zeta^{k_i} + \zeta^\rho + \sum_{x\in R}(-1)^{\langle u,x\rangle} \\
&= -\sum_{i=1}^A\zeta^{k_i}+\zeta^\rho+ A-1\\
& = -[A-(2^m+1)+\zeta^\rho]+\zeta^\rho + A -1 = 2^m.
\end{align*}
Now suppose that $u\in U_r^\perp$. In this case $\sum_{k=1}^A\sum_{x\in U_k^*}(-1)^{\langle u,x\rangle} = 2^m-A$, and hence
$\sum_{x\in R}(-1)^{\langle u,x\rangle} = A-2^m-1$. For $\mathcal{H}_{f}(u)$ we then get
\begin{align*}
\mathcal{H}_{f}(u) &= \sum_{i=1}^A\zeta^{k_i}\sum_{x\in U_i}(-1)^{\langle u,x\rangle} - \sum_{i=1}^A\zeta^{k_i} + \zeta^\rho - 2^m+A-1 \\
&= 2^m\zeta^{k_r} - A + 2^m+1-\zeta^\rho + \zeta^\rho - 2^m+A-1 = 2^m\zeta^{k_r}.
\end{align*}
Observing that $\mathcal{H}_f(0) = 2^m\zeta^\rho$, $\mathcal{H}_f(u) = 2^m$ if $u\not\in U_r^\perp$ for any $1\le r\le A$, and
$\mathcal{H}_f(u) = 2^m\zeta^{k_r}$ if $u\in U_r^\perp$, we confirm the statement for the dual $f^*$.
\end{proof}
%
The next corollary confirms that Proposition \ref{PSt} exactly yields the class of partial spread bent functions when $t=1$.
\begin{corollary}
\label{PS+-}
For $t=1$, the functions in Theorem~\textup{\ref{PSt}} are exactly the partial spread bent functions. In particular,
with $\rho = 0$ one obtains the class of the $PS^-$ Boolean bent functions, and with $\rho = 1$ one obtains the
class of the $PS^+$ Boolean bent functions.
\end{corollary}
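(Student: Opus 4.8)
The plan is to specialize Proposition~\ref{PSt} to $q=2$ and then match the resulting functions with Dillon's definitions. When $t=1$ we have $\zeta=-1$, and the set $\{1,\dots,q-1\}$ from which the $k_i$ are drawn is the singleton $\{1\}$; hence $k_i=1$ and $\zeta^{k_i}=-1$ for every $i$, so the left-hand side of~\eqref{PScon} equals $-A$. Condition~\eqref{PScon} therefore reads $-A=A-(2^m+1)+(-1)^\rho$, that is,
\[ A=\frac{2^m+1-(-1)^\rho}{2}, \]
which forces $A=2^{m-1}$ when $\rho=0$ and $A=2^{m-1}+1$ when $\rho=1$; conversely, for either of these pairs $(\rho,A)$ the identity~\eqref{PScon} holds automatically. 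Thus, for $t=1$, the function produced by Proposition~\ref{PSt} is the Boolean function equal to $1$ on $\bigcup_{k=1}^A U_k^*$, equal to $\rho$ at $0$, and equal to $0$ elsewhere, with $A$ determined as above.

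Next I would identify these functions with the two Dillon subclasses. Recall from~\cite{d} that a $PS^-$ bent function on $\V_{2m}$ is a Boolean function whose support is $\bigcup_{k=1}^{2^{m-1}}U_k^*$ for $m$-dimensional subspaces $U_1,\dots,U_{2^{m-1}}$ with pairwise trivial intersection (so $f(0)=0$ and the support has size $2^{2m-1}-2^{m-1}$), while a $PS^+$ bent function has support $\{0\}\cup\bigcup_{k=1}^{2^{m-1}+1}U_k^*$ for such a partial spread of size $2^{m-1}+1$ (so $f(0)=1$ and the support has size $2^{2m-1}+2^{m-1}$). Comparing with the previous paragraph, the $\rho=0$ case of Proposition~\ref{PSt} (for $t=1$) produces exactly the $PS^-$ functions and the $\rho=1$ case produces exactly the $PS^+$ functions. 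For the reverse inclusion, given a $PS^-$ (resp.\ $PS^+$) bent function, take $\rho=0$, $A=2^{m-1}$ (resp.\ $\rho=1$, $A=2^{m-1}+1$), $k_i=1$ for all $i$, and the partial spread underlying it; the cardinality requirement in Dillon's definition is precisely~\eqref{PScon} in this case, so the function arises from Proposition~\ref{PSt}. Hence, for $t=1$, the family of Proposition~\ref{PSt} is exactly $PS^-\cup PS^+$, split by the value of $\rho$.

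There is essentially no genuine obstacle here; the single point requiring care is the bookkeeping of cardinalities, namely that $|\bigcup_{k=1}^A U_k^*|=A(2^m-1)$ and that adjoining $0$ turns the count $2^{m-1}(2^m-1)=2^{2m-1}-2^{m-1}$ (the $\rho=0$, $A=2^{m-1}$ case) into $(2^{m-1}+1)(2^m-1)+1=2^{2m-1}+2^{m-1}$ (the $\rho=1$, $A=2^{m-1}+1$ case), matching the two possible support sizes of a bent function on $2m$ variables; together with the observation that over $q=2$ there is no freedom left in the $k_i$, so that $\rho$ — equivalently, whether $0$ belongs to the support — is the sole parameter distinguishing $PS^-$ from $PS^+$.
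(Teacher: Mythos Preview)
Your proof is correct and follows essentially the same route as the paper: specialize $t=1$ so that $k_i=1$ and $\zeta=-1$, reduce~\eqref{PScon} to $2A=2^m+1-(-1)^\rho$, read off $A=2^{m-1}$ or $A=2^{m-1}+1$ according to $\rho$, and match these supports with Dillon's $PS^-$ and $PS^+$ definitions, with the converse immediate. The extra cardinality bookkeeping you include is harmless but not needed.
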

\begin{proof}
If $t=1$, i.e. $q=2$, then $k_1 = k_2 = \cdots = k_A = 1$ and the condition~\eqref{PScon} is $\sum_{i=1}^A(-1)^1 = A-(2^m+1)+(-1)^\rho$,
or equivalently $2A = 2^m+1-(-1)^\rho$. Hence $A=2^{m-1}$ if $\rho = 0$, and $A=2^{m-1}+1$ if $\rho = 1$. In other words, if $\rho = 0$,
then the support of the Boolean function $f$ is the union of $2^{m-1}$ spread elements excluding the $0$, if $\rho = 1$, then the support
of the Boolean function $f$ is the union of $2^{m-1}+1$ spread elements (with the $0$). This exactly defines the class of the $PS^-$ Boolean
bent functions respectively the class of the $PS^+$ Boolean bent functions. Conversely, it is easily seen that any $PS^-$ (respectively, $PS^+$) Boolean bent
function is of the form of $f$ in Theorem~\ref{PSt} satisfying~\eqref{PScon} with $A = 2^{m-1}$ and $\rho = 0$
(respectively, $A = 2^{m-1}+1$ and $\rho = 1$).
\end{proof}
In the remainder of this section we show that Proposition~\ref{PSt} covers all gbent functions $f\in\mathcal{GB}_n^{2^t}$ which are constant on the nonzero
elements of every element of a partial spread, $f(0) = \rho$ for some $0\le \rho\le 2^t-1$, and $f(x) = 0$ for the remaining $x$. We may call this class the
class of the {\it partial spread gbent functions} in $\mathcal{GB}_n^{2^t}$. In Theorem~\ref{I-IV} below, we will represent this class of gbent functions in
a more descriptive way. We will use the following lemma.
\begin{lemma}
\label{L4Ny}
Let $q=2^t$, $t>1$, $\zeta=e^{2\pi i/q}$.
If $\rho_k\in\Q$, $0 \le k\le q-1$ and $\sum_{k=0}^{q-1}\rho_k\zeta^k = r$  is rational,
then $\rho_j = \rho_{2^{t-1}+j}$, for $1\le j \le 2^{t-1}-1$.
\end{lemma}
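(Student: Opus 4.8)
The plan is to exploit the arithmetic of $2$-power roots of unity. Since $t>1$, $\zeta=e^{2\pi i/q}$ with $q=2^t$ is a primitive $2^t$-th root of unity, whose minimal polynomial over $\Q$ is the cyclotomic polynomial $\Phi_{2^t}(X)=X^{2^{t-1}}+1$ (irreducible over $\Q$), so $[\Q(\zeta):\Q]=2^{t-1}$. Consequently the powers $1,\zeta,\zeta^2,\ldots,\zeta^{2^{t-1}-1}$ form a $\Q$-basis of $\Q(\zeta)$ and are in particular linearly independent over $\Q$; moreover $\zeta^{2^{t-1}}=-1$.

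First I would use $\zeta^{2^{t-1}}=-1$ to fold the given sum. Writing each index $k$ with $2^{t-1}\le k\le q-1$ as $k=2^{t-1}+j$ with $0\le j\le 2^{t-1}-1$, we have $\zeta^{2^{t-1}+j}=-\zeta^j$, hence
\[
\sum_{k=0}^{q-1}\rho_k\zeta^k=\sum_{j=0}^{2^{t-1}-1}\bigl(\rho_j-\rho_{2^{t-1}+j}\bigr)\zeta^j .
\]
Then the hypothesis $\sum_{k=0}^{q-1}\rho_k\zeta^k=r$ becomes
\[
\bigl(\rho_0-\rho_{2^{t-1}}-r\bigr)\cdot 1+\sum_{j=1}^{2^{t-1}-1}\bigl(\rho_j-\rho_{2^{t-1}+j}\bigr)\zeta^j=0,
\]
which is a $\Q$-linear relation among $1,\zeta,\ldots,\zeta^{2^{t-1}-1}$ because all the $\rho_k$ and $r$ are rational. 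By the linear independence recorded above, every coefficient must vanish; in particular $\rho_j-\rho_{2^{t-1}+j}=0$ for $1\le j\le 2^{t-1}-1$, which is exactly the claim (and, as a byproduct, $r=\rho_0-\rho_{2^{t-1}}$, so no constraint is forced on the pair $\rho_0,\rho_{2^{t-1}}$, consistent with the statement excluding $j=0$).

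There is no genuine obstacle in this argument; the only point that must be invoked carefully is the irreducibility of $\Phi_{2^t}$ over $\Q$ — equivalently, that $\Q(\zeta)$ has degree $2^{t-1}$ — which is what makes $1,\zeta,\ldots,\zeta^{2^{t-1}-1}$ a basis and is where the assumption $t>1$ enters. (For $t=1$ one simply has $\zeta=-1$, and the stated range $1\le j\le 2^{t-1}-1$ of indices is empty, so the conclusion is vacuous.)
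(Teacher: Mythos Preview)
Your proof is correct and follows essentially the same route as the paper: fold the sum using $\zeta^{2^{t-1}}=-1$, then use that $1,\zeta,\ldots,\zeta^{2^{t-1}-1}$ is a $\Q$-basis of $\Q(\zeta)$ (since $[\Q(\zeta):\Q]=\varphi(2^t)=2^{t-1}$) to read off $\rho_j=\rho_{2^{t-1}+j}$ for $1\le j\le 2^{t-1}-1$. Your write-up is slightly more explicit about the cyclotomic polynomial and the byproduct $r=\rho_0-\rho_{2^{t-1}}$, but the argument is the same.
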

\begin{proof}
Since $\zeta^{2^{t-1}+k} = -\zeta^k$ for $0\le k \le 2^{t-1}-1$, we can write every element $z$
of the cyclotomic field $\Q(\zeta)$ as
\[ z = \sum_{k=0}^{2^{t-1}-1}\lambda_k\zeta^k,\,\lambda_k\in\Q, 0\le k\le 2^{t-1}-1. \]
As $[\Q(\zeta):\Q] = \varphi(q) = 2^{t-1}$, the set $\{1,\zeta,\ldots,\zeta^{2^{t-1}-1}\}$ is a basis of
$\Q(\zeta)$. Since
\[ 0 = \sum_{k=0}^{q-1}\rho_k\zeta^k - r = (\rho_0-\rho_{2^{t-1}}-r) + \sum_{k=1}^{{2^{t-1}-1}}(\rho_j - \rho_{2^{t-1}+j})\zeta^k. \]
the assertion of the lemma follows.
\end{proof}

We recall the next result shown in~\cite{MMS1}.
\begin{proposition}
\label{propreg}
All gbent functions in $\cGB_n^{2^t}$ are regular.
\end{proposition}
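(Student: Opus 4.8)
The plan is to exploit the algebraic structure of the value $S:=\mathcal{H}^{(2^t)}_f(u)$. Setting $q=2^t$, we have $S\in\Z[\zeta_q]$ and, since $|S|=2^{n/2}$, also $S\overline S=2^{n}$; the whole point is to upgrade this single archimedean condition to the statement that $S$ equals $2^{n/2}$ times a root of unity (we take $n=2m$, so $2^{n/2}=2^m\in\Z[\zeta_q]$).

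First I would pass to ideals. Because $q=2^t$, the rational prime $2$ is totally ramified in $\Q(\zeta_q)$: one has $(2)=\mathfrak p^{\,2^{t-1}}$, where $\mathfrak p=(1-\zeta_q)$ is the \emph{unique} prime above $2$ and $2^{t-1}=[\Q(\zeta_q):\Q]$. From $S\overline S=2^{n}$ every prime dividing $(S)$ must divide $(2)$, hence equals $\mathfrak p$; so $(S)=\mathfrak p^{a}$ and $(\overline S)=\mathfrak p^{b}$ with $a+b=n\,2^{t-1}$. Applying complex conjugation, which fixes $\mathfrak p$ (there is only one prime above $2$), gives $b=a$, so $a=m\,2^{t-1}$ and $(S)=\mathfrak p^{\,m2^{t-1}}=(2^m)$. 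Hence $S=2^m u$ for a unit $u\in\Z[\zeta_q]^{\times}$, and $u\overline u=S\overline S/2^{2m}=1$.

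Next I would show $u$ is a root of unity. Since $\Q(\zeta_q)$ is a CM field (indeed abelian over $\Q$), complex conjugation commutes with every embedding $\sigma:\Q(\zeta_q)\hookrightarrow\BBC$, so $\sigma(\overline u)=\overline{\sigma(u)}$ and therefore $|\sigma(u)|^{2}=\sigma(u\overline u)=1$ for all $\sigma$. Thus $u$ is an algebraic integer all of whose conjugates lie on the unit circle, and by Kronecker's theorem $u$ is a root of unity; as the group of roots of unity of $\Q(\zeta_{2^t})$ is exactly $\langle\zeta_q\rangle$, we get $u=\zeta_q^{\,j_u}$ for some $j_u$. Therefore $\mathcal{H}^{(q)}_f(u)=2^{n/2}\zeta_q^{\,j_u}$, which is precisely the definition of regularity, and this holds for every $u\in\V_n$.

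The main obstacle is exactly the bootstrapping step: going from the one modulus constraint $|S|=2^{n/2}$ to control of all Galois conjugates of $S/2^{n/2}$. This is where the cyclotomic structure does the work — total ramification of $2$ forces $(S)=(2^{n/2})$, and then the commuting of conjugation with all embeddings (the CM property) lets Kronecker's theorem apply. (Alternatively, one can avoid algebraic number theory and induct on $t$, using the decomposition of $\mathcal{H}^{(2^t)}_f$ in terms of the Walsh--Hadamard transforms of the coordinate Boolean functions of $f$ to reduce gbentness over $\Z_{2^t}$ to the case $\Z_{2^{t-1}}$; but the cyclotomic argument above is cleaner.)
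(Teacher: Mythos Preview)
The paper does not prove this proposition; it is simply quoted from~\cite{MMS1} with no argument given. Your cyclotomic proof is correct under your explicit hypothesis $n=2m$: total ramification of $2$ in $\Q(\zeta_{2^t})$ forces $(S)=(2^m)$, the CM property then gives $|\sigma(S/2^m)|=1$ for every complex embedding $\sigma$, and Kronecker's theorem together with the fact that the roots of unity in $\Q(\zeta_{2^t})$ are exactly $\langle\zeta_q\rangle$ finishes the argument.

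This is a genuinely different route from the one in~\cite{MMS1}, which --- as you correctly anticipate in your closing parenthetical --- writes $f=\sum_{j=0}^{t-1}2^{j}a_j$ with Boolean $a_j$, expresses $\mathcal{H}^{(2^t)}_f$ in terms of ordinary Walsh--Hadamard transforms of Boolean combinations of the $a_j$, and inducts on $t$. That inductive approach is elementary and, as a by-product, identifies which Boolean combinations of the coordinate functions must themselves be bent; your approach is shorter and more conceptual but imports ramification theory and Kronecker's theorem. One caveat worth recording: your argument, and indeed the proposition as literally stated, genuinely require $n$ even. For $q=4$ and odd $n$ gbent functions \emph{do} exist (already at $n=1$: take $f(0)=0$, $f(1)=1$), and for these $\mathcal{H}_f(u)$ is $2^{n/2}$ times a primitive eighth root of unity, not a fourth root, so regularity in the paper's sense fails. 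Since every use of Proposition~\ref{propreg} in this paper occurs with $n=2m$, this does not affect anything downstream.
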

With Lemma \ref{L4Ny} we can also describe the distribution of the values of a gbent function in $\cGB_n^{2^t}$.
\begin{lemma}
For $q=2^t$, $n = 2m$, let $f\in\mathcal{GB}_n^q$ be a gbent function and for $j\in\Z_p$ denote $b_j:=|f^{-1}(j)|$.
Then there exists $0\le k \le 2^{t-1}-1$ such that
\[ b_{2^{t-1}+k} = b_k\pm 2^m \mbox{ and } b_{2^{t-1}+j} = b_j, \mbox{ for } 0\le k \le 2^{t-1}-1, j\ne k. \]
\end{lemma}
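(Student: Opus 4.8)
The idea is to apply the Fourier-inversion identity derived in the introduction, namely
\[
2^n\zeta_q^{f(y)} = 2^{n/2}\sum_{u\in\V_n}(-1)^{\langle u,y\rangle}\mathcal{H}^{(q)}_f(u),
\]
but to sum it over all $y\in\V_n$ rather than fix $y$. Summing the left-hand side over $y$ produces $2^n\sum_{j=0}^{q-1}b_j\zeta^j$, while summing the right-hand side produces $2^{n/2}\sum_u\left(\sum_y(-1)^{\langle u,y\rangle}\right)\mathcal{H}_f(u) = 2^{n/2}\cdot 2^n\,\mathcal{H}_f(0)/2^{n/2}$ — wait, more carefully: $\sum_{y}(-1)^{\langle u,y\rangle}=0$ unless $u=0$, in which case it is $2^n$, so the right side collapses to $2^{n/2}\cdot 2^n\cdot\mathcal{H}_f(0)\cdot 2^{-?}$; I should just track constants and conclude $\sum_{j=0}^{q-1}b_j\zeta^j = \mathcal{H}_f(0)$. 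Since $f$ is gbent and, by Proposition~\ref{propreg}, regular, $\mathcal{H}_f(0) = 2^m\zeta^{f^*(0)} = 2^m\zeta^{j_0}$ for some $0\le j_0\le q-1$.

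Now I have a $\Q$-linear relation $\sum_{j=0}^{q-1}b_j\zeta^j = 2^m\zeta^{j_0}$ with rational (indeed integer) coefficients $b_j$. The next step is to move the $2^m\zeta^{j_0}$ term to the left: if $0\le j_0\le 2^{t-1}-1$, this is $\sum_j b_j\zeta^j$ with $b_{j_0}$ replaced by $b_{j_0}-2^m$, equal to $0$, a rational number; if $2^{t-1}\le j_0\le q-1$, write $\zeta^{j_0}=-\zeta^{j_0-2^{t-1}}$ and absorb similarly, replacing $b_{j_0-2^{t-1}}$ by $b_{j_0-2^{t-1}}+2^m$. Either way we get a relation $\sum_{j=0}^{q-1}\rho_j\zeta^j = 0\in\Q$ where the $\rho_j$ agree with the $b_j$ except in one coordinate $k\in\{0,\dots,2^{t-1}-1\}$, where $\rho_k = b_k\mp 2^m$ (or, in the second case, $\rho_{k} = b_k$ and instead some other index is shifted — I will organize the case analysis so that the shifted-by-$2^m$ index always lands in the lower half, which is exactly what the statement records). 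Applying Lemma~\ref{L4Ny} with $r=0$ gives $\rho_j = \rho_{2^{t-1}+j}$ for $1\le j\le 2^{t-1}-1$, and the $j=0$ term of the lemma's proof gives $\rho_0 = \rho_{2^{t-1}}$ as well (reading off the constant coefficient). Translating back to the $b_j$ yields precisely $b_{2^{t-1}+k}=b_k\pm 2^m$ and $b_{2^{t-1}+j}=b_j$ for $j\ne k$.

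**Main obstacle.** The only real subtlety is the bookkeeping of which index gets the $\pm 2^m$ shift and ensuring it can always be taken in the range $0\le k\le 2^{t-1}-1$ as the statement demands; this is handled by the sign flip $\zeta^{2^{t-1}+k}=-\zeta^k$, so that whether $j_0$ lies in the lower or upper half of $\{0,\dots,q-1\}$, after reduction the anomalous coordinate is an index $k\le 2^{t-1}-1$ and the sign of the shift is $+$ or $-$ accordingly. One should also double-check the degenerate possibility $b_k - 2^m$ or $b_k+2^m$ forcing $\rho_k<0$: Lemma~\ref{L4Ny} only needs the $\rho_j$ rational, not nonnegative, so nothing breaks. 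A minor point worth stating explicitly is that the lemma as written only asserts equality for $1\le j\le 2^{t-1}-1$, but its proof also yields the $k=0$ coincidence $\rho_0=\rho_{2^{t-1}}$ (from vanishing of the constant term against the rational part $r=0$), which is what we need to cover all indices $j\ne k$; alternatively, one notes $\sum_j b_j = 2^n$ together with the $2^{t-1}-1$ relations already forces the last one.
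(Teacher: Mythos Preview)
Your proposal is correct and follows essentially the same route as the paper: obtain $\sum_{j=0}^{q-1}b_j\zeta^j=\mathcal{H}_f(0)=2^m\zeta^{k'}$ from regularity (Proposition~\ref{propreg}), then invoke Lemma~\ref{L4Ny} and split into the two cases $k'=k$ or $k'=2^{t-1}+k$ to read off the claimed relations. Your detour through the Fourier-inversion identity is unnecessary, since $\mathcal{H}_f(0)=\sum_{x}\zeta^{f(x)}=\sum_j b_j\zeta^j$ is immediate from the definition of $\mathcal{H}_f$; but your careful handling of the $j=0$ case of Lemma~\ref{L4Ny} (via the constant term) is a useful clarification that the paper leaves implicit.
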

\begin{proof}
By Proposition \ref{propreg} the gbent function $f$ is regular. Hence for some $0\le k^\prime \le 2^t-1$,
\[ \mathcal{H}_f(0) = \sum_{x\in\V_n}\zeta^{f(x)} = \sum_{j=0}^{2^t-1}b_j\zeta^j = 2^m\zeta^{k^\prime}. \]
With $k^\prime = k$ or $k^\prime = 2^{t-1}+k$ for some $0\le k \le 2^{t-1}-1$ the claim follows then from Lemma~\ref{L4Ny}.
\end{proof}

The next theorem is the main result of this section. It completely describes the class of partial spread gbent functions in $\mathcal{GB}_n^{2^t}$.
\begin{theorem}
\label{I-IV}
Let $q=2^t$, $n=2m$ and let $U_1,\ldots, U_A$ be elements of a partial spread of $\V_n = \V_{2m}$. Let $f$ be constant on $U_k^*$, $1\le k\le A$, $f(0) = \rho$ for some
$0\le\rho\le 2^t-1$, and $f(x) = 0$ for $x\in\V_n\setminus\bigcup_{k=1}^AU_k$. We denote by $c_j$, $1\le j\le 2^t-1$, the number of spread elements whose nonzero elements are
mapped to $j$ and put $\sum_{j=1}^{2^{t-1}-1}c_j := \Delta$ and $c_{2^{t-1}} := \bar{c}$.
If $f$ is gbent, then $f$ satisfies one of the conditions $I, II, III$, or $IV$, depending upon the value of~$\rho$.
\begin{itemize}
\item[I.] $\rho = 0$, $c_{2^{t-1}+j}=c_j$, $1\le j\le 2^{t-1}-1$, and $A = 2^{m-1} + \Delta$, $\bar{c} = 2^{m-1} - \Delta = 2^m-A$.
\item[II.] $1\le\rho\le 2^{t-1}-1$, $c_{2^{t-1}+j}=c_j$, $1\le j\le 2^{t-1}-1$, $j\ne \rho$, $c_{2^{t-1}+\rho} = c_\rho-1$, and
$A = 2^{m-1} + \Delta$, $\bar{c} = 2^{m-1} + 1 - \Delta = 2^m+1-A$.
\item[III.] $\rho = 2^{t-1}$, $c_{2^{t-1}+j}=c_j$, $1\le j\le 2^{t-1}-1$, and $A = 2^{m-1} + 1 + \Delta$, $\bar{c} = 2^{m-1} + 1 - \Delta = 2^m+2-A$.
\item[IV.] $2^{t-1}+1\le\rho\le 2^t-1$, $c_{2^{t-1}+j}=c_j$, $1\le j\le 2^{t-1}-1$, $j\ne \rho$, $c_\rho = c_{\rho-2^{t-1}}+1$, and
$A = 2^{m-1} + 1 + \Delta$, $\bar{c} = 2^{m-1} - \Delta = 2^m+1-A$.
\end{itemize}
Conversely, every function $f:\V_n\rightarrow\Z_{2^t}$ described in $I, II, III$, or $IV$ is a partial spread gbent function.
\end{theorem}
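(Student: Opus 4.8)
The plan is to prove the two implications separately. The converse --- that each of the four descriptions I--IV produces a gbent function --- is short, since it reduces to Proposition~\ref{PSt}; the forward implication carries the weight.

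First I would record, once and for all, the value distribution of a function $f$ of the prescribed shape. Since the nonzero parts $U_k^*$ are pairwise disjoint, $f$ is constant and nonzero on each of them, and $f$ vanishes off $\bigcup_k U_k$ apart from $f(0)=\rho$, one gets, writing $b_j:=|f^{-1}(j)|$ and letting $[\,\cdot\,]$ denote the Iverson bracket,
\[ b_j = c_j(2^m-1) + [\rho=j]\quad(1\le j\le 2^t-1),\qquad b_0 = 2^n - A(2^m-1) - 1 + [\rho=0], \]
together with $A=\sum_{j=1}^{2^t-1}c_j = \Delta + \bar{c} + \sum_{j=2^{t-1}+1}^{2^t-1}c_j$.

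For the converse, assume the data satisfy one of I, II, III, IV. Using $\zeta^{2^{t-1}+j}=-\zeta^j$ we rewrite
\[ \sum_{i=1}^A\zeta^{k_i} = \sum_{j=1}^{2^t-1}c_j\zeta^j = \sum_{j=1}^{2^{t-1}-1}\bigl(c_j - c_{2^{t-1}+j}\bigr)\zeta^j - \bar{c}. \]
In cases I and III every difference $c_j-c_{2^{t-1}+j}$ vanishes, so this equals $-\bar{c}$; in cases II and IV exactly one difference is nonzero, at $j=\rho$ (resp.\ $j=\rho-2^{t-1}$), and since $\zeta^{\rho}=-\zeta^{\rho-2^{t-1}}$ in case IV the whole sum equals $\zeta^\rho-\bar{c}$. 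Substituting the prescribed values of $\bar{c}$, $A$, $\Delta$ (and $\zeta^\rho=1$ for I, $\zeta^\rho=-1$ for III) one checks directly that $\sum_{i=1}^A\zeta^{k_i}=A-(2^m+1)+\zeta^\rho$, i.e.\ condition~\eqref{PScon} of Proposition~\ref{PSt} holds; by that proposition $f$ is gbent.

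For the forward implication, assume $f$ is gbent; we may take $t>1$ (the case $t=1$ is Corollary~\ref{PS+-}) and $m\ge2$ (the case $m=1$, where $n=2$, is settled directly). By Proposition~\ref{propreg} the function $f$ is regular, so the value-distribution lemma preceding the theorem applies --- and note that only $\mathcal{H}_f(0)$ enters this argument --- giving $0\le k\le 2^{t-1}-1$ with $b_{2^{t-1}+k}=b_k\pm2^m$ and $b_{2^{t-1}+j}=b_j$ for every other $j\in\{0,\ldots,2^{t-1}-1\}$. Inserting the formulas above, for $j\ne k$ the relation $b_{2^{t-1}+j}=b_j$ reads $c_{2^{t-1}+j}(2^m-1)+[\rho=2^{t-1}+j]=c_j(2^m-1)+[\rho=j]$; since $2^m-1$ is odd and exceeds $1$, this forces $c_{2^{t-1}+j}=c_j$ when $\rho\notin\{j,2^{t-1}+j\}$ and is impossible otherwise. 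Hence $k=\rho$ if $1\le\rho\le2^{t-1}-1$, $k=\rho-2^{t-1}$ if $2^{t-1}+1\le\rho\le2^t-1$, while if $\rho\in\{0,2^{t-1}\}$ the possibility $k\ne0$ is ruled out because $b_{2^{t-1}}=b_0$ then fails modulo $2^m-1$, so $k=0$. In each $\rho$-range the single remaining equation $b_{2^{t-1}+k}=b_k\pm2^m$, read modulo $2^m-1$, fixes the sign and gives the stated relation between $c_\rho$ and $c_{\rho\pm2^{t-1}}$ (or between $b_{2^{t-1}}$ and $b_0$ when $k=0$); combining the $b_{2^{t-1}}$-versus-$b_0$ relation with $A=\sum_j c_j$ then yields the asserted values of $A$ and $\bar{c}$. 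Carrying this out for $\rho=0$, $1\le\rho\le2^{t-1}-1$, $\rho=2^{t-1}$ and $2^{t-1}+1\le\rho\le2^t-1$ gives precisely I, II, III, IV.

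The only genuine obstacle is the case bookkeeping in the forward direction: within each range of $\rho$ one must identify which index plays the role of the distinguished $k$, decide which of the two signs $\pm2^m$ actually occurs, and extract $A$ and $\bar{c}$. All three points rest on the same elementary fact --- $2^m-1$ is odd and larger than $1$, hence divides neither $1$, nor $2$, nor any power of $2$ --- and beyond this the whole theorem is substitution together with Proposition~\ref{PSt}.
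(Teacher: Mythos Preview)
Your proof is correct and follows essentially the same approach as the paper's: both arguments extract all information from $\mathcal{H}_f(0)$ alone, invoke Lemma~\ref{L4Ny}, and exploit divisibility by $2^m-1$ to pin down the $c_j$'s. The only organizational difference is that you route the analysis through the value-distribution lemma and case-split on $\rho$, whereas the paper writes $\mathcal{H}_f(0)=2^m\zeta^s$ directly and case-splits on $s$; the underlying reasoning is identical.
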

\begin{proof}
By a straightforward computation, one can easily check that the conditions in $I, II, III$, and $IV$ imply~\eqref{PScon}.
Hence by Proposition \ref{PSt} all such functions are partial spread gbent functions.

It remains to show that every partial spread gbent function satisfies one of the conditions $I, II, III$, or $IV$.
First observe that for every partial spread gbent function $f\in\mathcal{GB}_n^{2^t}$, for some $0\le s\le 2^t-1$ we have
\[ 2^m\zeta^s = \mathcal{H}_f(0) = 2^n-A(2^m-1)-1 + \zeta^\rho + (2^m-1)\sum_{j=1}^{2^t-1}c_j\zeta^j, \]
or equivalently
\begin{equation}
\label{reint}
(2^m-1)\sum_{j=1}^{2^t-1}c_j\zeta^j + \zeta^\rho - 2^m\zeta^s = (2^m-1)[A-(2^m+1)].
\end{equation}
{\it Case $s=0$:} By equation $(\ref{reint})$ in this case $\zeta^\rho + (2^m-1)\sum_{j=1}^{2^t-1}c_j\zeta^j$ is an integer, thus by
Lemma \ref{L4Ny} the coefficients of $\zeta^j$ and $\zeta^{j+2^{t-1}}$ must be equal for $1\le j\le 2^{t-1}-1$.
Since in $(2^m-1)\sum_{j=1}^{2^t-1}c_j\zeta^j$ all coefficients are $0$ modulo $(2^m-1)$, we must have $\rho = 0$ or $\rho = 2^{t-1}$,
i.e. $\zeta^\rho = \pm 1$, and $c_{j+2^{t-1}} = c_j$, $1\le j\le 2^{t-1}-1$. Equation $(\ref{reint})$ then yields
\begin{equation}
\label{1-version}
-(2^m-1)\bar{c} = (2^m-1)[A-(2^m+1)] + 2^m - \zeta^\rho.
\end{equation}
As the left side is divisible by $(2^m-1)$, on the right side of the equation we require $\zeta^\rho = 1$, consequently $\rho = 0$.
Dividing by $2^m-1$ we then get $\bar{c} = 2^m-A$. With $A = \bar{c} + 2\sum_{j=1}^{2^{t-1}-1}c_j = \bar{c}+2\Delta$, we obtain
$\bar{c}=2^{m-1}-\Delta$ and $A=2^{m-1}+\Delta$, and we observe that $f$ satisfies $I$. \\[.3em]
{\it Case $s=2^{t-1}$:} In this case for equation $(\ref{1-version})$ we obtain
\[ -(2^m-1)\bar{c} = (2^m-1)[A-(2^m+1)] - 2^m - \zeta^\rho \]
and hence we require $\zeta^\rho = -1$, consequently $\rho = 2^{t-1} = s$.
Dividing by $2^m-1$ then yields $\bar{c} = 2^m+2-A$, which implies $\bar{c}=2^{m-1}+1-\Delta$ and $A=2^{m-1}+1+\Delta$. Therefore
$f$ satisfies $III$. \\[.3em]
{\it Case $1\le s\le 2^{t-1}$:} As the right side of equation $(\ref{reint})$,
\[ (2^m-1)\sum_{j=1}^{2^t-1}c_j\zeta^j + \zeta^\rho - 2^m\zeta^s, \]
is an integer, the coefficients of $\zeta^j$ and $\zeta^{j+2^{t-1}}$,
$1\le j\le 2^{t-1}-1$, must be the same by Lemma \ref{L4Ny}. Since in $(2^m-1)\sum_{j=1}^{2^t-1}c_j\zeta^j$
all coefficients, in particular those of $\zeta^s$ and $\zeta^{s+2^{t-1}}$, are $0$ modulo $(2^m-1)$, we require that $\rho = s$ or
$\rho = s+2^{t-1}$. Observing that $\zeta^s-2^m\zeta^s = (1-2^m)\zeta^s$, but $\zeta^{s+2^{t-1}}-2^m\zeta^s = -(1+2^m)\zeta^s = B\zeta^s$
with $B$ not divisible by $2^m-1$, we conclude that $\rho = s$.
With $c_j = c_{j+2^{t-1}}$, $1\le j\le 2^{t-1}-1$, $j\ne \rho$, equation $(\ref{reint})$ then yields
\[ -(2^m-1)\bar{c} + [(2^m-1)c_\rho - (2^m-1)c_{\rho+2^{t-1}} - (2^m-1)]\zeta^\rho = (2^m-1)[A-(2^m+1)]. \]
Consequently, $(2^m-1)(c_\rho - c_{\rho+2^{t-1}} - 1) = 0$, i.e. $c_{\rho+2^{t-1}} = c_\rho - 1$, and
$\bar{c} = 2^m+1-A$. Now $A = \bar{c} + \sum_{j=1}^{2^{t-1}-1}c_j + \sum_{j=2^{t-1}+1}^{2^t-1}c_j = \bar{c}+2\Delta-1$,
from which we get $A = 2^{m-1}+\Delta$ and $\bar{c} = 2^{m-1}+1-\Delta$. Hence $f$ satisfies $II$. \\[.3em]
Similarly one shows that if the parameter $s$ in equation $(\ref{reint})$ satisfies $2^{t-1}+1\le s\le 2^t-1$, then $f$ satisfies~$IV$.
\end{proof}
%
We remark that one can also easily show that every function $f$ which is constant on every $U_k^*$, for which
$f(x) = 0$ if $x\in\V_n\setminus\bigcup_{k=1}^AU_k$ and for which $(\ref{PScon})$ holds, also satisfies $I, II, III$, or $IV$,
depending upon the value of~$f(0)$. Consequently also Proposition~\ref{PSt} describes the whole set of partial spread
gbent functions from $\V_n$ to $\Z_{2^t}$.

%
{\it Example for $q = 4$.}
To construct a partial spread gbent function $f$ from $\V_n$ to $\Z_4$ from a partial spread $U_1,\ldots, U_A$ of $\V_n = \V_{2m}$,
we again denote the number of spread elements whose nonzero elements are mapped to $1,2$ and $3$ by $c_1,c_2$ and $c_3$, respectively.
Then we can choose
\begin{itemize}
\item[$I.$] $f(0) = 0$, $c_1 = A-2^{m-1}$, $c_2 = 2^{m-1}-c_1$ and $c_3=c_1$,
\item[$II.$] $f(0) = 1$, $c_1 = A-2^{m-1}$, $c_2 = 2^{m-1}-c_1+1$ and $c_3=c_1-1$,
\item[$III.$] $f(0) = 2$, $c_1 = A-2^{m-1}-1$, $c_2 = 2^{m-1}-c_1+1$ and $c_3=c_1$,
\item[$IV.$] $f(0) = 3$, $c_1 = A-2^{m-1}-1$, $c_2 = 2^{m-1}-c_1$ and $c_3=c_1+1$.
\end{itemize}
\begin{remark}
If $q=2$, in which case Theorem~\textup{\ref{I-IV}} describes Dillon's partial spread class, then $f(0)$ uniquely determines $A$
(and $c_1$). \\
For $q=4$, the number $A$ of spread elements and $f(0)$ uniquely determine
$c_1,c_2$ and $c_3$. Note that we require $A \ge 2^{m-1}$ in case $I$, and $A \ge 2^{m-1}+1$ in the cases $II, III, IV$.
\end{remark}

\section{Vectorial gbent functions}
\label{vectorGB}

Recall that a function $F$ from $\F_2^n$ to $\F_2^m$ given as
\[ F(x_1,x_2,\ldots,x_n) = \left(
\begin{array}{c}
f_1(x_1,x_2,\ldots,x_n) \\
f_2(x_1,x_2,\ldots,x_n) \\
\vdots \\
f_m(x_1,x_2,\ldots,x_n)
\end{array}
\right) \]
is called vectorial bent, if every nontrivial linear combination $\lambda_1f_1 + \lambda_2f_2 + \cdots + \lambda_mf_m$
is bent. In other words, $\{f_1,f_2,\ldots,f_m\}$ is a basis of an $m$-dimensional vector space of bent functions over $\F_2$.
Classical examples of vectorial bent functions arise from the Maiorana-McFarland class and the partial spread class,
see for instance \cite{cmp4,n,zp}.
As already shown by Nyberg in \cite[Theorem 3.2]{n}, for a vectorial Boolean bent function $m$ can be at most $n/2$.
We remark that this is different for vectorial bent functions from $\F_p^n$ to $\F_p^m$ for an odd prime $p$,
where we have $m\le n$ (see again \cite{n}). The vectorial bent functions (in odd characteristic) with $m=n$ are
the widely-noted planar functions.

In the following definition we suggest a concept for a vectorial gbent function.
To the best of our knowledge, this is the first treatment of gbentness for vectorial functions.
%
%
\begin{definition}
\label{DefVecgb}
For two integers $m,n$, a function from $\F_2^n$ to $\Z_q^m$ given as
\[ F(x_1,x_2,\ldots,x_n) = \left(
\begin{array}{c}
f_1(x_1,x_2,\ldots,x_n) \\
f_2(x_1,x_2,\ldots,x_n) \\
\vdots \\
f_m(x_1,x_2,\ldots,x_n)
\end{array}
\right) \]
is called a vectorial gbent function if $\{f_1,f_2,\ldots,f_m\}$ is a basis of a $\Z_q$-module of gbent functions isomorphic to $\Z_q^m$.
The functions $\lambda_1f_1 + \lambda_2f_2 + \cdots + \lambda_mf_m$, $(\lambda_1,\lambda_2,\ldots,\lambda_m)\ne 0\in\Z_q^m$, are called
the component functions of $F$.
\end{definition}

In this section we are once again interested in the case $q = 2^t$. As before we write $\mathcal{H}_f$ for
$\mathcal{H}_f^{(q)}$ and we put $\zeta = e^{\frac{2\pi i}{2^t}}$.
In the next theorem we determine the maximal value which $m$ can attain for a vectorial gbent function from
$\F_2^n$ to $\Z_{2^t}^m$. This generalizes Theorem 3.2 in \cite{n} to vectorial gbent functions.
%
%
%
%
\begin{theorem}
\label{nyberg}
For $q=2^t$ and an even integer $n > 2$, let $F$ be a vectorial gbent function from $\F_2^n$ to $\Z_q^m$. Then $m\le n/(2t)$.
\end{theorem}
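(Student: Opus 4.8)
The plan is to mimic Nyberg's counting argument for vectorial Boolean bent functions, replacing the $\pm 2^{n/2}$ values of a Boolean Walsh transform with the values $2^{n/2}\zeta^{j}$ of a (regular) gbent function over $\Z_{2^t}$, and then extracting the extra factor of $t$ from the structure of these values modulo the field $\Q(\zeta)$. First I would fix notation: for $\lambda\in\Z_q^m\setminus\{0\}$ write $f_\lambda=\lambda_1 f_1+\cdots+\lambda_m f_m$ for the corresponding component function, which by hypothesis is gbent, hence regular by Proposition~\ref{propreg}, so $\mathcal{H}_{f_\lambda}(u)=2^{n/2}\zeta^{f_\lambda^*(u)}$ for every $u\in\F_2^n$. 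I would then compute, for a \emph{fixed} nonzero $u$, the sum $\sum_{x\in\F_2^n}\big(\sum_{\lambda\in\Z_q^m}\zeta^{f_\lambda(x)}\big)(-1)^{\langle u,x\rangle}$ in two ways. On one hand it equals $\sum_{\lambda}\mathcal{H}_{f_\lambda}(u)=2^{n/2}\sum_{\lambda}\zeta^{f_\lambda^*(u)}$ (the term $\lambda=0$ contributes $0$ since $n>0$ and $u\ne0$). On the other hand, interchanging sums, the inner sum $\sum_{\lambda\in\Z_q^m}\zeta^{f_\lambda(x)}=\sum_{\lambda}\zeta^{\langle\lambda,F(x)\rangle}$ is a character sum over the group $\Z_q^m$ evaluated at the element $F(x)\in\Z_q^m$, which equals $q^m$ if $F(x)=0$ and $0$ otherwise. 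Hence the whole double sum equals $q^m\sum_{x\in F^{-1}(0)}(-1)^{\langle u,x\rangle}$, an integer of absolute value at most $q^m|F^{-1}(0)|$.

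Next I would pin down $|F^{-1}(0)|$. Taking instead the sum with $u=0$ (or summing the previous identity over all $u$ and dividing by $2^n$, or directly evaluating at a single point $x$) gives $q^m|F^{-1}(0)|=\sum_{x}\sum_{\lambda}\zeta^{f_\lambda(x)}=\sum_{\lambda}\mathcal{H}_{f_\lambda}(0)=2^n+2^{n/2}\sum_{\lambda\ne0}\zeta^{f_\lambda^*(0)}$; taking the constant term (the rational part under the basis $\{1,\zeta,\dots,\zeta^{2^{t-1}-1}\}$ of $\Q(\zeta)$, cf. Lemma~\ref{L4Ny}) and using $|F^{-1}(0)|\ge1$ (since $F(0)=0$, which one may assume after translating each $f_j$ by a constant) shows $q^m\le 2^n+\text{(something controlled)}$, but the clean route is: from $q^m\sum_{x\in F^{-1}(0)}(-1)^{\langle u,x\rangle}=2^{n/2}\sum_{\lambda\ne 0}\zeta^{f_\lambda^*(u)}$ for every $u\ne 0$, and $q^m|F^{-1}(0)|=2^n+2^{n/2}\sum_{\lambda\ne0}\zeta^{f_\lambda^*(0)}$ for $u=0$, one obtains via the orthogonality $\sum_{u}(-1)^{\langle u,y\rangle}=2^n[y=0]$ applied to a Fourier inversion of the indicator of $F^{-1}(0)$ that $q^{2m}|F^{-1}(0)| = 2^{n}\cdot 2^n + 2^{n}\sum_{\lambda\ne0}|\mathcal{H}_{f_\lambda}(0)|^2/2^n$-type bookkeeping; to avoid that messiness I would instead argue directly that $F^{-1}(0)$ is a linear subspace.

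That linearity is the crux and the step I expect to be the main obstacle. The claim is: since each $f_\lambda^*$ is again gbent (hence each component $\mathcal{H}_{f_\lambda}(u)$ has modulus exactly $2^{n/2}$), the identity $q^m\,\widehat{\mathbf{1}_{F^{-1}(0)}}(u)=2^{n/2}\sum_{\lambda\ne 0}\zeta^{f_\lambda^*(u)}$ forces $\widehat{\mathbf{1}_{F^{-1}(0)}}(u)$ to be ``flat'' in absolute value on its support, and a standard argument (the one behind the fact that a bent function's support-correlation picks out an affine subspace) shows $F^{-1}(0)$ must be an affine subspace, so $|F^{-1}(0)|=2^d$ for some $d$; combined with $F^{-1}(0)\ni 0$ it is a linear subspace of dimension $d$. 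Then the two-way count gives $q^m\cdot 2^d=2^n+2^{n/2}\sum_{\lambda\ne0}\zeta^{f_\lambda^*(0)}$ with $|\sum_{\lambda\ne0}\zeta^{f_\lambda^*(0)}|\le q^m-1<q^m$, and more carefully, by considering the squared $L^2$ identity $\sum_u |q^m\widehat{\mathbf 1_{F^{-1}(0)}}(u)|^2 = 2^n\sum_\lambda\sum_{\lambda'}\sum_u\zeta^{f_\lambda^*(u)-f_{\lambda'}^*(u)}$ and noting the diagonal $\lambda=\lambda'$ already contributes $2^n\cdot q^m\cdot 2^n$ while the left side is $q^{2m}\cdot 2^n\cdot|F^{-1}(0)|^{?}$—the upshot I am aiming for is the inequality $q^m\le 2^{n/2}$, equivalently $2^{tm}\le 2^{n/2}$, i.e. $m\le n/(2t)$. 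Concretely, the cleanest finish: $F$ restricted to $F^{-1}(0)$ being a subspace of dimension $d$ means each nonzero component $f_\lambda$ is \emph{balanced} on a complement... so instead I would count $|F^{-1}(0)|$ exactly by Fourier inversion at a generic point and get $q^m \mid 2^n$ only after showing $\sum_{\lambda\ne 0}\zeta^{f_\lambda^*(u)}\in 2^{n/2}\Z$; that divisibility, coming from $q^m\mid 2^{n/2}\sum_{\lambda\ne0}\zeta^{f^*_\lambda(u)}$ in $\Z[\zeta]$ and the fact that $2^{n/2}$ is the relevant power of $2$ up to units, yields $q^m=2^{tm}\le 2^{n/2}$ as desired; I would present this last algebraic-number-theory step (tracking the $2$-adic valuation of $2^{n/2}\sum\zeta^{f^*_\lambda(u)}$ against $q^m=2^{tm}$, using that $(1-\zeta)$ is the prime above $2$ in $\Z[\zeta]$ with $v_2(2)=\varphi(q)\cdot v_{(1-\zeta)}(\cdot)$ normalization) carefully, as it is where the factor $t$ genuinely enters and where sign/valuation errors are easiest to make.
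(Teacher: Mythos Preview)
Your setup is correct and is exactly the paper's: at $u=0$ your identity
\[
q^m\,|F^{-1}(0)| \;=\; 2^n + 2^{n/2}\sum_{\lambda\ne 0}\zeta^{f_\lambda^*(0)}
\]
is precisely the paper's $2^{tm}a_0 = 2^n + 2^{n/2}S$ with $a_0=|F^{-1}(0)|$ and $S=\sum_{\lambda\ne 0}\zeta^{f_\lambda^*(0)}$. From here the paper finishes in two lines, and you never do. The missing step is simply that $S$ is an \emph{odd integer}. It is rational because $2^{tm}a_0-2^n$ is; then, writing $\rho_k$ for the number of nonzero $\lambda$ with $f_\lambda^*(0)=k$, Lemma~\ref{L4Ny} forces $\rho_j=\rho_{2^{t-1}+j}$ for $1\le j\le 2^{t-1}-1$, so $S=\rho_0-\rho_{2^{t-1}}$ while $\sum_{k}\rho_k=q^m-1$ is odd, hence $\rho_0-\rho_{2^{t-1}}\equiv \rho_0+\rho_{2^{t-1}}\equiv 1\pmod 2$. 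With $S$ odd and $n>2$, the quantity $S+2^{n/2}$ is odd, and $a_0=2^{n/2-tm}(S+2^{n/2})$ being a nonnegative integer forces $n/2-tm\ge 0$, i.e.\ $m\le n/(2t)$.

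Your three detours are all unnecessary, and one is unjustified. The claim that $F^{-1}(0)$ is a linear (or even affine) subspace is not proved: nothing in the hypotheses gives $F(0)=0$, and your ``flatness'' heuristic does not apply since you have not shown $\widehat{\mathbf 1_{F^{-1}(0)}}$ is two-valued. The $L^2$ computation is abandoned mid-sentence. The $(1-\zeta)$-adic valuation route at the end \emph{would} work---indeed $\zeta^k\equiv 1\pmod{(1-\zeta)}$ gives $S\equiv q^m-1\equiv 1\pmod{(1-\zeta)}$, so $v_{(1-\zeta)}(S)=0$, which is exactly the statement ``$S$ is odd'' once you know $S$ is rational---but you do not actually carry it out, and it is heavier machinery than the one-line application of Lemma~\ref{L4Ny} above. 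Drop the subspace and $L^2$ attempts entirely and just prove $S$ is an odd integer; that is the whole argument.
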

\begin{proof}
For  an $m$-tuple $c = (c_1,\ldots,c_m)\in\Z_q^m$ we denote the component function $c\cdot F = c_1f_1+\cdots+c_mf_m$ by
$F_c$. Then
\[ \mathcal{H}_{F_c}(0) = \sum_{x\in\F_2^n}\zeta^{c\cdot F(x)} = \sum_{y\in\Z_q^m}a_y\zeta^{c\cdot y}, \]
where $a_y = |\{F(x) = y\;|\;x\in \F_2^n\}|$ for all $y\in\Z_q^m$. Putting $S=2^{-n/2}\sum_{c\ne 0}\mathcal{H}_{F_c}(0)$ we have
\[ 2^{n/2}S = \sum_{y\in\Z_q^m}a_y\sum_{c\ne 0}\zeta^{c\cdot y} = \sum_{\substack{y\in\Z_q^m\\ y\ne 0}}a_y(-1) + a_0(q^m-1), \]
where in the last step we use that $\sum_{c\ne 0}\zeta^{c\cdot y} = -1$ for all $y\ne 0$.

With $\sum_{y\in\Z_q^m}a_y = 2^n$ we get
\[ 2^{n/2}S = -\sum_{y\in\Z_q^m}a_y + 2^{tm}a_0 = -2^n + 2^{tm}a_0, \]
hence
\[ S = -2^{n/2}+2^{tm-n/2}a_0. \]
In the next step we show that $S$ is an odd integer. First note that $S$ is rational since $n$ is even.
Put
\[ \rho_k = |\{c\in\Z_q^m, c\ne 0\;:\;\mathcal{H}_{F_c}(0) = 2^{n/2}\zeta^k\}|. \]
Recall that by Proposition~\ref{propreg} all component functions $F_c$ are regular. Hence we have $\sum_{k=0}^{q-1}\rho_k = q^m-1$ and $S=\sum_{k=0}^{q-1}\rho_k\zeta^k$.
Because $S$ is rational, by Lemma~\ref{L4Ny} we have
\[ q^m-1 = \rho_0 + \rho_{2^{t-1}} + 2\sum_{k=1}^{2^{t-1}-1}\rho_k \]
and $S = \rho_0 - \rho_{2^{t-1}}$. Combining, we obtain that
\[ S = 2\rho_0 + 2\sum_{k=1}^{2^{t-1}-1}\rho_k - q^m+1 \]
is an odd integer. Finally with
\[ a_0 = 2^{n/2-tm}(S+2^{n/2}) \]
for some odd integer $S$, we see that $n/2 \ge tm$.
\end{proof}
In the usual representation of the classical examples of conventional vectorial bent functions, like Maiorana-McFarland and partial spread vectorial
bent functions from $\F_{2^m}\times\F_{2^m}$ to $\F_{2^m}$, where $m$ is at most $n/2$, one takes advantage from the fact that the vectorial bent functions
form vector spaces isomorphic to $\F_{2^m}$ as a vector space over $\F_2$. This is different for vectorial gbent functions, hence we cannot use the structure
of the finite field in an analogous way to obtain examples. In view of Theorem~\ref{nyberg} we are mostly interested in vectorial gbent functions from
$\F_2^n$ to $\Z_{2^t}^m$ with~$m=n/(2t)$. We give a construction with generalized Dillon type gbent functions, which guarantees the existence of such vectorial
gbent functions.
\begin{theorem}
Let $n,m,t$ be integers such that $m=n/(2t)$, and let $U_s$, $0\le s\le 2^{n/2}$, be the $2^{n/2}+1$ elements of a spread of $\V_n$.
Consider a bijection $\phi:\{1,2,\ldots,2^{n/2}\}\rightarrow\Z_{2^t}^m$
\[ \phi(s) = (\phi_1(s),\phi_2(s),\ldots,\phi_m(s))^T, \]
and define $f_j:\V_n\rightarrow\Z_{2^t}$, $1\le j\le m$, by
\begin{equation*}
\begin{split}
&f_j(x) = \phi_j(s)  \mbox{ if }  x\in U_s, 1\le s\le 2^{n/2},  \mbox{ and }  x\ne 0,\\
 & \mbox{and }   f_j(x) = 0 \mbox{ if } x\in U_0.
\end{split}
\end{equation*}
The function $F$ from $\V_n$ to $\Z_{2^t}^m$ given by
\[ F(x) = \left(
\begin{array}{c}
f_1(x) \\
f_2(x) \\
\vdots \\
f_m(x)
\end{array}
\right) \]
is a vectorial gbent function.
\end{theorem}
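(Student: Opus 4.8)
The plan is to show that for every nonzero $c=(c_1,\ldots,c_m)\in\Z_{2^t}^m$ the component function $F_c=c_1f_1+\cdots+c_mf_m$ is a gbent function of the generalized Dillon type described in equation~\eqref{gPS}, and moreover that the $\Z_{2^t}$-module generated by $f_1,\ldots,f_m$ is free of rank $m$, i.e.\ isomorphic to $\Z_{2^t}^m$. Once both points are established, $F$ is a vectorial gbent function by Definition~\ref{DefVecgb}. The key observation driving everything is that, by construction, $F(x)=\phi(s)$ for $x\in U_s$ with $1\le s\le 2^{n/2}$ and $x\ne 0$, while $F(x)=0=\phi$-image of the index $0$ slot on $U_0$; thus $F_c(x)=c\cdot\phi(s)$ on $U_s^*$ and $F_c(x)=0$ on $U_0^*$ and at $0$.

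First I would fix a nonzero $c$ and set $k_s:=c\cdot\phi(s)\in\Z_{2^t}$ for $1\le s\le 2^{n/2}$, and $k_0:=0$, $r:=0$. Then $F_c$ is exactly the function built in~\eqref{gPS} from the spread $U_0,U_1,\ldots,U_{2^{n/2}}$ with these constants, provided the compatibility condition $\sum_{s=0}^{2^{n/2}}\zeta^{k_s}=\zeta^{r}=1$ holds. Since $k_0=0$ contributes $\zeta^{0}=1$, this reduces to showing $\sum_{s=1}^{2^{n/2}}\zeta^{c\cdot\phi(s)}=0$. Because $\phi$ is a bijection from $\{1,\ldots,2^{n/2}\}$ onto $\Z_{2^t}^m$ and $|\Z_{2^t}^m|=2^{tm}=2^{n/2}$, this sum equals $\sum_{y\in\Z_{2^t}^m}\zeta^{c\cdot y}$, which is $0$ for every $c\ne 0$ by the standard character-sum identity (exactly the identity used in the proof of Theorem~\ref{nyberg}). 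Hence the hypotheses of the construction in~\eqref{gPS} are met, and $F_c$ is gbent for every nonzero $c$.

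Second I would verify the module-structure requirement. The map $c\mapsto F_c$ is a $\Z_{2^t}$-linear map from $\Z_{2^t}^m$ into $\mathcal{GB}_n^{2^t}$; I must check it is injective, equivalently that $F_c\equiv 0$ forces $c=0$. If $F_c$ is identically zero, then $c\cdot\phi(s)=0$ in $\Z_{2^t}$ for all $s\in\{1,\ldots,2^{n/2}\}$; since $\phi$ is onto $\Z_{2^t}^m$ this says $c\cdot y=0$ for every $y\in\Z_{2^t}^m$, which forces each coordinate $c_j=0$ (take $y$ a standard basis vector). So the map is injective, its image is a free $\Z_{2^t}$-module of rank $m$, and $\{f_1,\ldots,f_m\}$ (the images of the standard basis) is a basis of a $\Z_{2^t}$-module of gbent functions isomorphic to $\Z_{2^t}^m$. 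Combined with the gbentness of every nonzero component function, this is precisely Definition~\ref{DefVecgb}, so $F$ is a vectorial gbent function.

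The only mildly delicate point is the bookkeeping around the distinguished spread element $U_0$: one must confirm that assigning the value $0$ on all of $U_0$ (including its nonzero part) and at the origin is consistent with the $k_0=0$, $r=0$ slot of~\eqref{gPS}, and that the count $2^{n/2}=|\{1,\ldots,2^{n/2}\}|=|\Z_{2^t}^m|$ uses exactly the hypothesis $m=n/(2t)$ — this is where the dimension condition is essential, since it is what lets $\phi$ be a bijection and hence makes the character sum vanish. Everything else is a direct application of the already-proved construction~\eqref{gPS} and elementary character theory, so there is no real obstacle beyond this dimension count.
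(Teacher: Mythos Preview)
Your proposal is correct and follows essentially the same route as the paper: both arguments reduce to the identity $\sum_{s=1}^{2^{n/2}}\zeta^{c\cdot\phi(s)}=\sum_{y\in\Z_{2^t}^m}\zeta^{c\cdot y}=0$ for $c\ne 0$, which is exactly where the hypothesis $m=n/(2t)$ is used. The only differences are cosmetic: you invoke the already-established construction~\eqref{gPS} instead of recomputing $\mathcal{H}_{F_c}(u)$ from scratch, and you explicitly verify that $c\mapsto F_c$ is injective so that the $\Z_{2^t}$-module generated by $f_1,\ldots,f_m$ is free of rank $m$ --- a point the paper's proof leaves implicit but which is technically required by Definition~\ref{DefVecgb}.
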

\begin{proof}
We have to show that every component function $F_c(x)=c_1f_1(x)+c_2f_2(x)+\cdots+c_mf_m(x)$, $c=(c_1,c_2,\ldots,c_m)\ne 0\in\Z_{2^t}^m$, is a gbent function.
First observe that if $x\in U_s$, $1\le s\le 2^{n/2}$, and $x\ne 0$, then
\[ F_c(x) = \sum_{j=1}^mc_j\phi_j(s) = c\cdot\phi(s) = \left(\begin{array}{c}c_1 \\ c_2 \\ \vdots \\ c_m \end{array}\right)
\cdot
\left(\begin{array}{c}\phi_1(s) \\ \phi_2(s) \\ \vdots \\ \phi_m(s) \end{array}\right). \]
For $x\in U_0$ we have $F_c(x) = 0$. For $u\in \V_n$, we then obtain
\begin{eqnarray*}
\mathcal{H}_{F_c}(u) & = & \sum_{s=1}^{2^{n/2}}\sum_{\substack{x\in U_s\\ x\ne 0}}\zeta^{c\cdot\phi(s)}(-1)^{\langle u,x\rangle} + \sum_{x\in U_0}(-1)^{\langle u,x\rangle} \\
& = & \sum_{s=1}^{2^{n/2}}\zeta^{c\cdot\phi(s)}\sum_{x\in U_s}(-1)^{\langle u,x\rangle} - \sum_{s=1}^{2^{n/2}}\zeta^{c\cdot\phi(s)} + \sum_{x\in U_0}(-1)^{\langle u,x\rangle}.
\end{eqnarray*}
Since $\phi$ is a bijection, we have $\sum_{s=1}^{2^{n/2}}\zeta^{c\cdot\phi(s)} = 0$ for all nonzero $c\in\Z_{2^t}^m$.
Consequently for $u\ne 0$ we get
\[ \mathcal{H}_{F_c}(u) = \left\{
\begin{array}{l@{\quad:\quad}l}
2^{n/2} & u\in U_0^\perp, \\
2^{n/2}\zeta^{c\cdot\phi(\tilde{s})} & u\in U_{\tilde{s}}^\perp\:\mbox{for some}\;1\le \tilde s\le 2^{n/2}.
\end{array}
\right. \]
Again using that $\sum_{s=1}^{2^{n/2}}\zeta^{c\cdot\phi(s)} = 0$, we obtain $\mathcal{H}_{F_c}(0) = 2^{n/2}$, and the theorem is shown.
\end{proof}

Besides from applications in cryptography, one motivation for considering (vectorial) bent functions is their relation to objects in combinatorics.
For instance, a vectorial bent function from $\V_n$ to $\F_2^m$ gives rise to a relative difference set of $\V_n\times\F_2^m$.
We conclude this section pointing out a relation between relative difference sets and vectorial gbent functions as introduced in our in Definition
\ref{DefVecgb}. First we recall the definition of a relative difference set. Let $G$ be a group of order $\mu\nu$, let $N$ be a subgroup of $G$ of
order $\nu$ and let $R$ be a subset of $G$ of cardinality $k$. Then $R$ is called a $(\mu,\nu,k,\lambda)$-relative difference set of $G$ relative to
$N$, if every element $g\in G\setminus N$ can be represented in exactly $\lambda$ ways as difference $r_1-r_2$, $r_1,r_2\in R$, and no nonzero
element of $N$ has such a representation.

Relative difference sets can be described with characters as follows (see for instance Section 2.4. in \cite{tpf}).
\begin{proposition}
\label{RDS}
Let $G$ be an (abelian) group of order $\mu\nu$ and let $N$ be a subgroup of $G$ of order $\nu$.
A subset $R$ of $G$ (with $k$ elements) is an $(\mu,\nu,k,\lambda)$-relative difference set of $G$ relative to $N$ if and only if
for every character $\chi$ of $G$
\begin{equation*}
|\chi(R)|^2 = \left\{\begin{array}{ll}
                  k^2\quad & \mbox{if}\;\chi = \chi_0, \\
                  k-\lambda \nu\quad & \mbox{if}\;\chi \ne \chi_0,\:\mbox{but}\;\chi(g) = 1\;\mbox{for all}\;g\in N, \\
                  k\quad & \mbox{otherwise.}
                  \end{array} \right.
\end{equation*}
\end{proposition}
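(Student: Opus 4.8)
The plan is to recast the definition of a relative difference set as a single identity in the group ring $\Z[G]$ and then evaluate it at the characters of $G$. Throughout I identify a subset $X\subseteq G$ with the group-ring element $\sum_{x\in X}x\in\Z[G]$, consistent with the convention $\chi(R)=\sum_{r\in R}\chi(r)$ used in the statement, and I write $R^{(-1)}=\sum_{r\in R}(-r)$. First I would record the elementary identity
\[ R\,R^{(-1)}=\sum_{g\in G}d(g)\,g,\qquad d(g)=\bigl|\{(r_1,r_2)\in R\times R : r_1-r_2=g\}\bigr|, \]
noting that $d(0)=|R|=k$ automatically. Comparing coefficients, $R$ is a $(\mu,\nu,k,\lambda)$-relative difference set of $G$ relative to $N$ precisely when $d(g)=0$ for $g\in N\setminus\{0\}$ and $d(g)=\lambda$ for $g\in G\setminus N$, that is, precisely when
\[ R\,R^{(-1)}=k\cdot 1+\lambda\Bigl(\sum_{g\in G}g-\sum_{n\in N}n\Bigr) \]
holds in $\Z[G]$, where $1$ is the identity element. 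It thus remains to show that this group-ring identity is equivalent to the stated character condition.

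For that I would use that, since $G$ is finite abelian, the assignment $A\mapsto(\chi(A))_\chi$, where each character $\chi$ is extended linearly to a ring homomorphism $\BBC[G]\to\BBC$, is an isomorphism $\BBC[G]\cong\BBC^{|G|}$; in particular an element of $\BBC[G]$ is determined by its images under all characters. Applying a character $\chi$: from $|\chi(r)|=1$ we get $\chi(R^{(-1)})=\sum_{r\in R}\overline{\chi(r)}=\overline{\chi(R)}$, hence $\chi(R\,R^{(-1)})=|\chi(R)|^2$; and by orthogonality $\sum_{g\in G}\chi(g)=\mu\nu$ if $\chi=\chi_0$ and $0$ otherwise, while $\sum_{n\in N}\chi(n)=\nu$ if $\chi$ is trivial on $N$ and $0$ otherwise. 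Therefore the group-ring identity is equivalent to requiring, for every character $\chi$, that $|\chi(R)|^2$ equal $k+\lambda(\mu\nu-\nu)=k+\lambda\nu(\mu-1)$ when $\chi=\chi_0$, equal $k-\lambda\nu$ when $\chi\neq\chi_0$ is trivial on $N$, and equal $k$ otherwise; the latter two are exactly the values listed in the proposition.

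For the forward implication this finishes the argument: if $R$ is a relative difference set the group-ring identity holds, hence so do the character equalities, and in the principal case $|\chi_0(R)|^2=|R|^2=k^2$ directly, which moreover re-derives the feasibility relation $k^2=k+\lambda\nu(\mu-1)$. For the converse, I would put $A=R\,R^{(-1)}-k\cdot 1-\lambda\bigl(\sum_{g\in G}g-\sum_{n\in N}n\bigr)\in\Z[G]$ and observe that the three stated conditions force $\chi(A)=0$ for every nontrivial $\chi$, so $A$ lies in the one-dimensional subspace of $\BBC[G]$ spanned by $\sum_{g\in G}g$, i.e.\ $A=c\sum_{g\in G}g$ for some scalar $c$; but the coefficient of the identity element of $G$ in $R\,R^{(-1)}$ is $d(0)=k$ and in $k\cdot 1+\lambda\bigl(\sum_{g\in G}g-\sum_{n\in N}n\bigr)$ is likewise $k$ (since the identity lies in $N$), so the coefficient of $A$ there is $0$; as this coefficient equals $c$, we get $c=0$ and $A=0$, whence the group-ring identity holds and $R$ is a relative difference set of the stated type. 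The only genuinely delicate point is this last step: the principal-character clause $|\chi_0(R)|^2=k^2$ is merely the tautology $|R|=k$ and carries no information, so in the converse one must recover the scalar $c$ from the coefficient of the identity element rather than from $\chi_0$; everything else is routine character arithmetic together with the semisimplicity of $\BBC[G]$.
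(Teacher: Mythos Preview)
Your argument is correct and is precisely the standard group-ring/character proof of this characterization. Note, however, that the paper does not actually prove Proposition~\ref{RDS}: it is quoted as a known result with a reference (``see for instance Section~2.4 in \cite{tpf}''). So there is no ``paper's own proof'' to compare against; your write-up supplies one, and it is the textbook route via the identity $R\,R^{(-1)}=k\cdot 1+\lambda(G-N)$ in $\Z[G]$ together with the semisimplicity of $\BBC[G]$. Your handling of the converse is also right, including the observation that the principal-character clause $|\chi_0(R)|^2=k^2$ is vacuous and that the missing information is recovered from the coefficient of the identity element in $A$.
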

With this characterization of relative difference sets, the relation with our vectorial gbent functions becomes transparent.
Since it is the most interesting case, we consider a vectorial gbent function from $V_n$ to $\F_{2^t}^m$ with maximal
possible $m=n/(2t)$.
\begin{theorem}
Let $q=2^t$, and let $F$ be a vectorial gbent function from $\V_n$ to $\Z_q^m$, $m=n/(2t)$. Then the set
\[ R = \{(x,F(x))\;:\; x\in \V_n\} \]
is a $(2^n,2^{n/2},2^n,2^{n/2})$-relative difference set in $\V_n\times\Z_q^m$ relative to $\{0\}\times\Z_q^m$.
\end{theorem}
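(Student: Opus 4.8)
The plan is to verify the character condition in Proposition~\ref{RDS} directly for the set $R=\{(x,F(x)) : x\in\V_n\}\subseteq G:=\V_n\times\Z_q^m$ with $N:=\{0\}\times\Z_q^m$, using the parameters $\mu=2^n$, $\nu=2^{n/2}$, $k=|R|=2^n$, and $\lambda=2^{n/2}$. Every character $\chi$ of $G$ factors as $\chi(x,y)=(-1)^{\langle u,x\rangle}\,\zeta^{c\cdot y}$ for some $u\in\V_n$ and $c\in\Z_q^m$, so that
\[ \chi(R)=\sum_{x\in\V_n}(-1)^{\langle u,x\rangle}\zeta^{c\cdot F(x)}=\mathcal{H}_{F_c}(u) \]
when $c\ne 0$ (here $F_c=c\cdot F$ is a component function), while $\chi(R)=\sum_{x\in\V_n}(-1)^{\langle u,x\rangle}$ when $c=0$. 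Thus the three cases of Proposition~\ref{RDS} correspond exactly to (i) $\chi=\chi_0$, i.e. $u=0$ and $c=0$; (ii) $\chi$ trivial on $N$ but $\chi\ne\chi_0$, i.e. $c=0$ and $u\ne 0$; and (iii) $c\ne 0$.

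First I would dispose of the $c=0$ cases by the standard character-sum identity $\sum_{x\in\V_n}(-1)^{\langle u,x\rangle}=2^n$ if $u=0$ and $0$ otherwise; this gives $|\chi(R)|^2=2^{2n}=k^2$ in case (i) and $|\chi(R)|^2=0=k-\lambda\nu=2^n-2^{n/2}\cdot 2^{n/2}$ in case (ii), matching the required values. Then for case (iii), $c\ne 0$, the hypothesis that $F$ is vectorial gbent means precisely that every component function $F_c$ is gbent, so $|\mathcal{H}_{F_c}(u)|=2^{n/2}$ for all $u\in\V_n$; hence $|\chi(R)|^2=2^n=k$, which is exactly the ``otherwise'' value in Proposition~\ref{RDS}. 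Applying the proposition in the reverse direction then yields that $R$ is a $(2^n,2^{n/2},2^n,2^{n/2})$-relative difference set relative to $N$.

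The only genuine point requiring care — and the part I would state explicitly — is the identification of the characters of $N$ and the verification that $\chi$ is trivial on $N=\{0\}\times\Z_q^m$ if and only if $c=0$: this is what splits case (iii) off from case (ii), and it is immediate since $\chi(0,y)=\zeta^{c\cdot y}$ equals $1$ for all $y\in\Z_q^m$ exactly when $c=0$. Beyond this, everything is bookkeeping: one checks $k=|R|=2^n$ because $x\mapsto(x,F(x))$ is injective, confirms $\mu\nu=2^n\cdot 2^{n/2}=|G|=2^n\cdot q^m$ using $q^m=2^{tm}=2^{n/2}$, and notes $k-\lambda\nu=2^n-2^{n/2}\cdot 2^{n/2}=0$ as needed. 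I do not expect a real obstacle here; the content of the theorem is entirely carried by the gbentness of the component functions, and the proof is essentially a translation of Definition~\ref{DefVecgb} through the character formulation of relative difference sets.
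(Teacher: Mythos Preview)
Your proposal is correct and follows essentially the same approach as the paper: identify the characters of $\V_n\times\Z_q^m$ as $\chi_{u,c}(x,z)=(-1)^{\langle u,x\rangle}\zeta^{c\cdot z}$, compute $\chi_{u,c}(R)=\mathcal{H}_{F_c}(u)$, and read off the three cases of Proposition~\ref{RDS} from the gbentness of the component functions. Your version is in fact slightly more careful than the paper's in spelling out why $\chi$ is trivial on $N$ precisely when $c=0$ and in checking the parameter identity $k-\lambda\nu=0$.
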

\begin{proof}
The theorem essentially follows from Proposition \ref{RDS} with definition of vectorial gbent functions
(it is the same argument as for the conventional vectorial bent functions, where $t=1$):
Note that the group of characters of $\V_n\times\Z_q^m$ consists of the elements
$\chi_{u,c}:(x,z)\rightarrow (-1)^{\langle u,x\rangle}\zeta^{c\cdot z}$, $u\in \V_n, c\in\Z_q^m$. Therefore
\[ \chi_{u,c}(R) = \sum_{x\in\V_n}(-1)^{\langle u,x\rangle}i^{c\cdot F(x)} = \mathcal{H}_{F_c}(u). \]
(We include now also $c=0$.) By the definition of a vectorial gbent function we then have
\begin{equation*}
|\chi_{u,c}(R)|^2 = |\mathcal{H}_{F_c}(u)|^2 = \left\{\begin{array}{ll}
                  2^{2n}\quad & \mbox{for}\; \chi_{0,0}, \\
                  0 \quad & \mbox{for}\; \chi_{u,0}, u\ne 0 \\
                  2^n\quad & \mbox{otherwise.}
                  \end{array} \right.
\end{equation*}
Hence by Proposition \ref{RDS}, the set $R$ is a $(2^n,2^{n/2},2^n,2^{n/2})$-relative difference, relative to $\{0\}\times\Z_q^m$.
\end{proof}
\begin{remark}
Differently to the case of bent functions, for a gbent function $f\in\mathcal{GB}_n^{2^t}$ the set $\{(x,f(x))\;:\;x\in\V_n\}$
is in general not a relative difference set (of $V_n\times \Z_{2^t}$ relative to $\{0\}\times \Z_{2^t}$).
For instance we may have $\tilde{c}f = 0$ for a nonzero $\tilde{c}\in\Z_{2^t}$ - $f$ is then not vectorial (defined as in Definition \ref{DefVecgb}
with $m=1$). The character sum that corresponds to $\chi_{u,\tilde{c}}$ does then not attain the required value.
An example is the gbent function in \cite[Theorem 8]{smgs}.
\end{remark}

\noindent
{\bf Acknowledgements.} Work by P.S. started during a very enjoyable visit at RICAM (Johann Radon Institute for Computational and Applied Mathematics),
Austrian Academy of Sciences, in Linz, Austria. Both the second and third named author thank the institution for the excellent working conditions. \\[.3em]
The second author is supported by the Austrian Science Fund (FWF) Project no. M 1767-N26.

\end{document}